\newtheorem{theorem}{Theorem} 
\newtheorem*{assertion}{Assertion}
\newtheorem{corollary}[theorem]{Corollary}
\newtheorem{definition}[theorem]{Definition}
\newtheorem{lemma}[theorem]{Lemma}
\newtheorem*{remark}{Remark}
\newtheorem*{proof}{Proof}
\newcommand{\card}[1]{|#1|}
\newcommand{\abs}[1]{\lvert #1 \rvert}
\newcommand{\curr}{\mathit{curr}}
\newcommand{\define}[1]{\emph{#1}} 
\newcommand{\eg}{for example,\ } 
\newcommand{\es}{\emptyset}
\newcommand{\ie}{that is,\ } 
\newcommand{\old}{\mathit{old}}
\newcommand{\Time}{\mathit{time}}
\newcommand{\set}[1]{\{#1\}}
\def\steprel#1{\mathrel{\mathcal{#1}}}
\newcommand{\RDelta}{\mathrel\Delta}
\newcommand{\RDeltaone}{\mathrel{\Delta_1}}
\newcommand{\RGamma}{\mathrel\Gamma}
\newcommand{\RGammaone}{\mathrel{\Gamma_1}}
\newcommand{\stepR}{\steprel{R}}
\newcommand{\x}{\times}
\newcommand{\Z}{\mathbf{Z}}
\newcommand{\N}{\mathbf{N}}
\newcommand{\CA}{cellular automaton} 
\newcommand{\CAs}{cellular automata} 
\newcommand{\ACA}{asynchronous cellular automaton} 
\newcommand{\ACAs}{asynchronous cellular automata} 
\newcommand{\DCA}{deterministic cellular automaton} 
\newcommand{\DCAs}{deterministic cellular automata} 
\title{Phase Space Invertible Asynchronous Cellular Automata}
\author{Simon Wacker and Thomas Worsch
  \institute{Karlsruhe Institute of Technology\\
    Department of Informatics \\
    \email{simon.wacker@student.kit.edu}
    \qquad \email{worsch@kit.edu}
  }
}
\begin{document}

\maketitle

\begin{abstract}
  While for synchronous deterministic cellular automata there is an
  accepted definition of reversibility, the situation is less clear for
  asynchronous cellular automata. We first discuss a few possibilities
  and then investigate what we call \emph{phase space invertible}
  asynchronous cellular automata in more detail. We will show that for
  each Turing machine there is such a cellular automaton simulating it,
  and that it is decidable whether an asynchronous cellular automaton
  has this property or not, even in higher dimensions.
\end{abstract}

\section{Introduction}
\label{sec:intro}

For synchronous deterministic cellular automata the topic of
reversibility has gained a good deal of attention. Reversibility is
known to be decidable in one dimension \cite{Amoroso_1972_DPS_ar} and
undecidable in two and more dimensions \cite{Kari_1994_RSP_ar}.
Reversible synchronous \CAs\ are also computationally universal (see,
\eg \cite{Morita_1989_CUO_ar} for the one-dimensional case).

In the present paper we take a first look at the analogous questions for
\emph{asynchronous} \CAs. Hence the rest of this paper is organized as
follows: In section~\ref{sec:basics} we introduce some notation used
in this paper and two variants of asynchronicity.
In section~\ref{sec:invertibility} we discuss several possibilities
for the definition of ``reversibility'' of \ACAs.
In section~\ref{sec:completeness} we show that each Turing machine can
be simulated by a phase space invertible purely \ACA, and in
section~\ref{sec:decidability} that the property of being phase space
invertible is decidable for several variants of \ACAs.

This paper is based on the diploma thesis of the first author
\cite{Wacker_2012_RAZ_mt}.

\section{Basics}
\label{sec:basics}

\subsection{General Notation}

We write $\N$ for the set of natural numbers without $0$,
$\Z$ for the set of integers, $0 \in \Z^d$ for the $d$-tuple with each
component equal to zero, $B^A$ for the set of all total functions from
$A$ to $B$, and $2^M$ for the powerset of a set $M$. The cardinality of
a set $M$ is denoted as $\card{M}$. The restriction of a function
$f:A\to M$ to a subset $B\subseteq A$ of its domain is written $f|_B$.

In this paper we are interested in $d$-dimensional cellular automata

($d \in \N$) and the set of \define{cells} is usually denoted as $R$, \ie
$R=\Z^d$.  If the set of \define{states} of one cell is denoted as $Q$, the set
of all \define{(global) configurations} is $Q^R$.  A \define{neighborhood} is a
finite set $N=\{n_1,\dotsc,n_k\} \subseteq \Z^d$ of $d$-tuples of integers. A

\define{local configuration} is a mapping $\ell:N\to Q$; thus $Q^N$ is
the set of all local configurations. The local configuration $c_{i+N}$
observed by cell $i\in R$ in the global configuration $c \in Q^R$ is
defined as
\begin{align*}
  c_{i+N}: N &\to Q,\\
           n &\mapsto c(i+n).
\end{align*}
The behavior of each single cell of a deterministic \CA\ is described by
the \define{local transition function} $\delta: Q^N \to Q$.


\subsection{Asynchronous Updating Schemes}
\label{subsec:updating-schemes}

A local structure $(R,N,Q,\delta)$ of a deterministic \CA\ together with
a prescription how cells are updated induces a \define{global transition
relation} $\mathord{\stepR} \subseteq Q^R\x Q^R$ describing the possible
global steps which satisfies
\begin{equation*}
  c \stepR c'
  \Longrightarrow
  \forall i \in R: c'(i) \in \set{\delta(c_{i+N}), c(i)}.
\end{equation*}
We will use the same symbol for the related \define{global transition
function}
\begin{align*}
  \mathord{\stepR}: Q^R &\to 2^{Q^R}\text{,}\\
            c &\mapsto \{ c' \mid c\stepR c' \} \text{.}
\end{align*}
With this notation $c\stepR c'$ is equivalent to
$c' \in \mathord{\stepR}(c)$ and
both indicate that it is possible to reach global configuration
$c'\in Q^R$ in one step from global configuration $c\in Q^R$.

In a global step each cell has two possibilities: to be \define{active}
and make a state transition according to the rule or to be
\define{passive} and maintain its state. Restrictions made by
different updating schemes lead to different possible behaviors, \ie
different relations/functions $\stepR$, of \CAs.

Now, we will have a look at two different types of asynchronous
updating.

\paragraph{Purely Asynchronous Updating.}

This version of asynchronous updating has been considered for
many years now \cite{Nakamura_1974_ACA_ar,Golze_1978_ANC_ar}. In order
to distinguish it from the other form mentioned below we call it
\define{purely asynchronous} updating.

In each
global step there are no restrictions on whether a cell may be active
or passive. Thus in each step there is a subset $A \subseteq R$ of
active cells which make a transition, while the cells in the complement
$R \setminus A$ are passive and simply maintain their state.

Note that $A$ is is allowed to be empty. The additional requirement
$A\not=\emptyset$ might look irrelevant, but a closer look at the
constructions and theorems reveals that it would render the main
results wrong (see for example the remark after
Lemma~\ref{lem:fully-inversion-characterized}).

Given a \DCA\ and a set $A\subseteq R$ we define the function
\begin{align*}
  \Delta_A: Q^R &\to Q^R,\\
           \forall i \in R\colon\text{\qquad}
           \Delta_A(c)(i) &= \begin{cases}
                     \delta(c_{i+N}) & \text{ if $i\in A$,}\\
                     c(i)            & \text{ if $i\notin A$}.\\
                   \end{cases}
\end{align*}
Synchronous updating is described by $\RDelta_R$. The union of all
$\RDelta_A$, interpreted as relations, is the general step relation for
purely asynchronous updating, for which we will write
\[ \Delta =\bigcup_{A\subseteq R} \RDelta_A \text{.} \]
Note that for each global configuration $c \in Q^R$ holds
\[ \Delta(c) = \set{\Delta_A(c) \mid A \subseteq R} \text{.} \]

\paragraph{Fully Asynchronous Updating.}

In the fully asynchronous updating scheme it is required that in each
global step exactly one cell is active. Using the notation from above
one may say that one only looks at the relations $\RDelta_{\{i\}}$
where the set of active cells is a singleton. For the union of these
relations we will write
\[ \Delta_1 =\bigcup_{i\in R} \RDelta_{\{i\}} \text{.} \]
Even for relatively simple \DCAs, \eg the elementary \DCAs\ or
two-dimensional minority, the analysis of their behavior under fully
asynchronous updating is surprisingly ``non-simple''
\cite{DBLP:conf/acri/FatesG08,DBLP:journals/tcs/RegnaultST09,Lee_2004_AGL_ar}.

To distinguish the global step relations of two purely or fully
asynchronous \CAs\ $C$ and $G$ we will use $\RDelta_A$, $\RDelta$, and
$\RDeltaone$ for the respective relations of $C$ and $\RGamma_A$,
$\RGamma$, and $\RGammaone$ for the respective relations of $G$.

\section{Which Definition of Invertibility?}
\label{sec:invertibility}

For the global transition function $\Delta$ of a synchronous
deterministic \CA\ $C$ the following conditions are equivalent:
\begin{description}
\item[\textmd{R1}] Each global configuration $c\in Q^R$ has exactly
  one predecessor under $\Delta$.
\item[\textmd{R2}] Each global configuration $c\in Q^R$ has at most
  one predecessor under $\Delta$.
\item[\textmd{R3}] The inverse of the transition graph of $C$,
  \ie the direction of each transition is reversed,
  is the
  transition graph of a(nother) synchronous deterministic \CA\ $C'$.
\end{description}
For asynchronous \CAs\ these are really different conditions. In that
case, obviously, condition R1 still implies R2 as well as R3. But the
reverse implications do not hold. In order to clarify this, we first
show that there is essentially only one asynchronous \CA\ satisfying R1
or R2: The identity.

\begin{lemma}
  \label{lem:nontriv-2-pred}
  If a purely or fully asynchronous \CA\ $C=(R,N,Q,\delta)$ has a local
  transition function $\delta$ which is non-trivial in the sense that
  $\delta(\ell) \not= \ell(0)$ holds for at least one local
  configuration $\ell$, then there are two different global
  configurations $\hat{c}$ and $\check{c}$ and two singleton sets of active cells
  $\set{\hat{a}}$ and $\set{\check{a}}$ such that
  $\Delta_{\set{\hat{a}}}(\hat{c})=\Delta_{\set{\check{a}}}(\check{c})$.
  (Without loss of generality, we assume that $0$ is in the
  neighborhood.)
\end{lemma}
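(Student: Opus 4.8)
The plan is to produce a single global configuration that is reached in one single‑cell step from two \emph{distinct} configurations, using only the ``non‑triviality witness'' $\ell$. The idea is to install the pattern $\ell$ around two cells that are far enough apart that updating one of them does not touch the neighbourhood of the other; then the two intermediate configurations obtained by updating one cell or the other are different, yet a second update (at the still‑untouched cell) brings both to the same place.

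Concretely, I would first fix $\ell\in Q^N$ with $t:=\delta(\ell)\neq\ell(0)=:s$. Since $N$ is finite, so is $N-N=\set{n-n'\mid n,n'\in N}$, hence I can pick $v\in\Z^d$ with $v\neq 0$ and $v\notin N-N$; set $p:=0$ and $q:=v$, so that $(p+N)\cap(q+N)=\es$ and, in particular, $q\notin p+N$ and $p\notin q+N$. Next I would define a configuration $b\in Q^R$ with $b_{p+N}=\ell$ and $b_{q+N}=\ell$ (consistent because the two neighbourhoods are disjoint) and $b$ arbitrary elsewhere; since $0\in N$ this forces $b(p)=b(q)=s$. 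Finally put $\hat c:=\Delta_{\set{p}}(b)$ and $\check c:=\Delta_{\set{q}}(b)$: each is $b$ with one of the two cells $p,q$ changed from $s$ to $\delta(\ell)=t$, so $\hat c\neq\check c$ because they differ at $p$ (and also at $q$).

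It then remains to verify $\Delta_{\set{q}}(\hat c)=\Delta_{\set{p}}(\check c)$. Since $q\notin p+N$, passing from $b$ to $\hat c$ does not change the local configuration seen by cell $q$, so $\hat c_{q+N}=\ell$ and $\Delta_{\set{q}}(\hat c)$ equals $b$ with both $p$ and $q$ set to $t$; symmetrically, using $p\notin q+N$, the configuration $\Delta_{\set{p}}(\check c)$ is this same configuration. Taking $\hat a:=q$ and $\check a:=p$ then finishes the purely asynchronous case, and the identical construction applies to the fully asynchronous scheme, since singleton sets of active cells are precisely what is allowed there. The only genuinely delicate point is the choice of $v$: one must make the two copies of $\ell$ non‑interfering, so that a single update near one copy leaves the other copy intact; everything else is a routine unfolding of the definition of $\Delta_{\set{\cdot}}$. (Note that the construction never uses the empty set of active cells, consistently with the discussion preceding the lemma.)
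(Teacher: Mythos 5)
Your proof is correct and follows essentially the same route as the paper's: plant the witness local configuration $\ell$ around two cells whose neighbourhoods are disjoint, let $\hat{c}$ and $\check{c}$ be the two one-cell updates of the common configuration, and observe that updating the respective other cell merges them. (One cosmetic slip: the fact needed for $\hat{c}_{q+N}=\ell$ is $p\notin q+N$ rather than $q\notin p+N$, but you establish both from disjointness, so nothing is missing.)
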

In other words, there is a global configuration which has two different
predecessors.

Lemma~\ref{lem:nontriv-2-pred} shows that the requirement of R1 or R2
leaves the trivial identity as the only asynchronous \CA.
Later, in section~\ref{sec:completeness} and
section~\ref{sec:elementary}, we will see, that there are non-trivial
\ACAs\ satisfying R3. Therefore, for asynchronous \CAs, R3 does not
imply R1 or R2.

\begin{proof}[of Lemma~\ref{lem:nontriv-2-pred}]
  Let $\ell \in Q^N$ be a local configuration with $q'=\delta(\ell) \not=
  \ell(0)=q$.

  Let $a \in R$ be a cell large enough such that the neighborhoods $(-a)+N$ and
  $a+N$ are disjoint. We consider a global configuration
  $c\in Q^R$ in which cell $-a$ and cell $a$ both observe $\ell$ in their
  neighborhoods, \ie $c_{(-a)+N}=\ell$ and $c_{a+N}=\ell$, in particular
  $c(-a)=q$ and $c(a)=q$. Define two global configurations $\hat{c}$ and $\check{c}$
  which are identical to $c$ with the only exceptions $\hat{c}(-a)=q'$ and
  $\check{c}(a)=q'$ respectively. Since $q'\not= q$ the global configurations
  $\hat{c}$ and $\check{c}$ are different.

  \begin{figure}[ht]
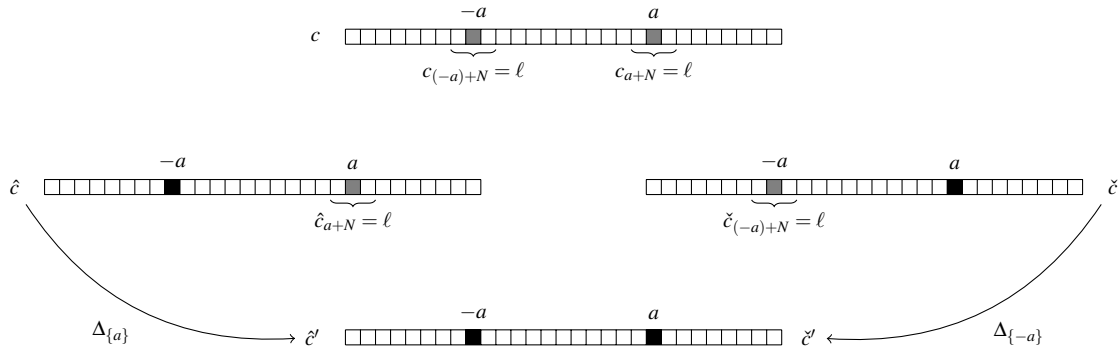

    \centering
    \tikz[scale=0.2, auto]{
      \path (14,0.5) node[anchor=east] (c1) {\scriptsize$c$};
      \path[fill=gray] (23,0) rectangle +(1,1) node [above,pos=0.5,yshift=3pt,scale=1] {\scriptsize$-a$}; 
      \draw[decoration={brace,mirror,raise=2pt}, decorate] (22,0) -- (25,0) node [below,pos=0.5,anchor=north,yshift=-3pt,scale=1] {\scriptsize$c_{(-a)+N} = \ell$};
      \path[fill=gray] (35,0) rectangle +(1,1) node [above,pos=0.5,yshift=3pt,scale=1] {\scriptsize$a$}; 
      \draw[decoration={brace,mirror,raise=2pt}, decorate] (34,0) -- (37,0) node [below,pos=0.5,anchor=north,yshift=-3pt,scale=1] {\scriptsize$c_{a+N} = \ell$};
      \draw[step=1] (15,0) grid +(29,1);
      \begin{scope}[shift={(-20,-10)}]
        \path (14,0.5) node[anchor=east] (c2) {\scriptsize$\hat{c}$};
        \path[fill] (23,0) rectangle +(1,1) node [above,pos=0.5,yshift=3pt,scale=1] {\scriptsize$-a$}; 
        \path[fill=gray] (35,0) rectangle +(1,1) node [above,pos=0.5,yshift=3pt,scale=1] {\scriptsize$a$}; 
        \draw[decoration={brace,mirror,raise=2pt}, decorate] (34,0) -- (37,0) node [below,pos=0.5,anchor=north,yshift=-3pt,scale=1] {\scriptsize$\hat{c}_{a+N} = \ell$};
        \draw[step=1] (15,0) grid +(29,1);
      \end{scope}
      \begin{scope}[shift={(20,-10)}]
        \path (47,0.5) node[anchor=east] (c3) {\scriptsize$\check{c}$};
        \path[fill=gray] (23,0) rectangle +(1,1) node [above,pos=0.5,yshift=3pt,scale=1] {\scriptsize$-a$}; 
        \draw[decoration={brace,mirror,raise=2pt}, decorate] (22,0) -- (25,0) node [below,pos=0.5,anchor=north,yshift=-3pt,scale=1] {\scriptsize$\check{c}_{(-a)+N} = \ell$};
        \path[fill] (35,0) rectangle +(1,1) node [above,pos=0.5,yshift=3pt,scale=1] {\scriptsize$a$}; 
        \draw[step=1] (15,0) grid +(29,1);
      \end{scope}
      \begin{scope}[shift={(0,-20)}]
        \path (14,0.5) node[anchor=east] (c4) {\scriptsize$\hat{c}'$};
        \path (47,0.5) node[anchor=east] (c5) {\scriptsize$\check{c}'$};
        \path[fill] (23,0) rectangle +(1,1) node [above,pos=0.5,yshift=3pt,scale=1] {\scriptsize$-a$}; 
        \path[fill] (35,0) rectangle +(1,1) node [above,pos=0.5,yshift=3pt,scale=1] {\scriptsize$a$}; 
        \draw[step=1] (15,0) grid +(29,1);
      \end{scope}
      \draw[->, bend right] (c2) to node[swap] {\scriptsize$\Delta_{\set{a}}$} (c4);
      \draw[->, bend left] (c3) to node {\scriptsize$\Delta_{\set{-a}}$} (c5);
    }
    \caption{Displayed are the relevant parts of the global
      configurations $c$, $\hat{c}$, $\check{c}$, $\hat{c}'$ and
      $\check{c}'$ in the one-dimensional case, \ie $R = \Z$, with
      neighborhood $N = \set{-1,0,1}$. Grey colored cells are in state
      $q$ and black colored cells in state $q'$. The state of white
      colored cells is not further specified. The solid directed
      lines indicate the two interesting global transitions.}
    \label{fig:lemma-1}
  \end{figure}

  But the two global configurations $\hat{c}'=\Delta_{\{a\}}(\hat{c})$ and
  $\check{c}'=\Delta_{-a}(\check{c})$ are the same (see Figure \ref{fig:lemma-1}).
  \begin{itemize}
  \item Each cell $i\notin \{-a,a\}$ is passive; therefore
    $\hat{c}'(i)=\hat{c}(i)=c(i)=\check{c}(i)=\check{c}'(i)$.
  \item $\hat{c}'(-a)= \hat{c}(-a) = q' = \delta(\ell) =
    \delta(\check{c}_{(-a)+N}) = \check{c}'(-a)$.
  \item $\hat{c}'(a)= \delta(\hat{c}_{a+N}) = \delta(\ell) = q' =
    \check{c}(a) = \check{c}'(a)$.
  \end{itemize}
\end{proof}
In a short presentation given at Automata 2011 Sarkar and Das
\cite{Sarkar_2011_RDA_ip} also consider some kind of
``reversibility of $1$-dimensional asynchronous cellular
automata''. We note that at least their setting is completely
different from ours: They only look at finite configurations (with
both, periodic and null, boundary conditions). And as far as we
understand their Definitions~1 and~2 \cite[p.~32]{Sarkar_2011_RDA_ip},
they call an \ACA, restricted to configurations of a fixed length,
reversible if each (finite) configuration has \emph{at least one}
predecessor. We do not pursue this line of thought.

Instead we use R3 as the guiding light. In order not to overload the
word ``reversible'' with too many meanings we will speak of phase
space invertibility, which we will often abbreviate as invertibility.

\begin{definition}
  \label{def:purely-invertible}
  A purely asynchronous \CA\ $C=(R,N,Q,\delta)$ is called
  \define{phase space invertible}, if there is a purely asynchronous
  \CA\ $G=(R,M,Q,\gamma)$ such that for each pair of global
  configurations $c, c'\in Q^R$ holds
  \[ c' \in\Delta(c) \iff c \in\Gamma(c') \text{,} \]
  which is equivalent to
  \begin{equation*}
    \exists A\subseteq R: c' = \Delta_A(c)
    \iff
    \exists A'\subseteq R: c = \Gamma_{A'}(c') \text{.}
  \end{equation*}
\end{definition}
We give a name to the set of cells on which two global configurations
differ in

\begin{definition}
  The \define{difference} $D_{c,c'}$ of two global configurations
  $c, c' \in Q^R$ is defined as
  \[ D_{c,c'} = \{ i\in R \mid c_i \not= c'_i \} \text{.} \]
  Note that $D_{c,c'} = D_{c',c}$.
\end{definition}
Since for purely asynchronous \CAs\ there are no restrictions on the
set of active cells, cells which do not change their state when active,
can be removed from the activity set without changing the outcome of a
transition.

This suggests the characterization of invertibility given in

\begin{lemma}
  \label{lem:purely-inversion-characterized}
  Two purely asynchronous \CAs\ $C=(R,N,Q,\delta)$ and
  $G=(R,M,Q,\gamma)$ are inverse to each other if, and only if, for each
  pair of global configurations $c, c'\in Q^R$ with $D_{c,c'} \neq \es$
  holds
  \[ c'=\Delta_{D_{c,c'}}(c) \iff c=\Gamma_{D_{c,c'}}(c') \text{.} \]
\end{lemma}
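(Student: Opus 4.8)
The plan is to collapse the existential quantifier over activity sets in Definition~\ref{def:purely-invertible} down to the single canonical choice $A=D_{c,c'}$, first for $C$ alone, then for $G$ alone, and finally to combine the two reductions.

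First I would establish the following absorption fact for one purely asynchronous \CA\ $C=(R,N,Q,\delta)$: for all $c,c'\in Q^R$,
\[ c'\in\Delta(c)\iff c'=\Delta_{D_{c,c'}}(c)\text{.} \]
The implication from right to left is immediate, since $\Delta(c)=\set{\Delta_A(c)\mid A\subseteq R}$ and $D_{c,c'}\subseteq R$. For the converse, assume $c'=\Delta_A(c)$ for some $A\subseteq R$. A cell $i\in D_{c,c'}$ changes its state, so it cannot be passive; hence $i\in A$ and therefore $c'(i)=\Delta_A(c)(i)=\delta(c_{i+N})$. (Symmetrically, a cell $i\in A\setminus D_{c,c'}$ is active but keeps its state, so $\delta(c_{i+N})=c(i)=c'(i)$; this is the precise sense in which such cells may be removed from the activity set without affecting the outcome.) Evaluating $\Delta_{D_{c,c'}}(c)$ cell by cell now gives $\delta(c_{i+N})=c'(i)$ for $i\in D_{c,c'}$ and $c(i)=c'(i)$ for $i\notin D_{c,c'}$, \ie $\Delta_{D_{c,c'}}(c)=c'$. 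Applying the same argument to $G=(R,M,Q,\gamma)$, and using $D_{c',c}=D_{c,c'}$, yields $c\in\Gamma(c')\iff c=\Gamma_{D_{c,c'}}(c')$.

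Then I would combine the two reductions. By Definition~\ref{def:purely-invertible}, $C$ and $G$ are inverse to each other if and only if $c'\in\Delta(c)\iff c\in\Gamma(c')$ holds for every pair $c,c'\in Q^R$; by the absorption fact this is equivalent to requiring $c'=\Delta_{D_{c,c'}}(c)\iff c=\Gamma_{D_{c,c'}}(c')$ for every pair. Finally, the pairs with $D_{c,c'}=\es$, \ie $c=c'$, impose no constraint: there $\Delta_\es(c)=c=c'$ and $\Gamma_\es(c')=c'=c$, so both sides of the biconditional are unconditionally true. Hence restricting the quantification to pairs with $D_{c,c'}\neq\es$ gives an equivalent condition, which is exactly the statement of the lemma.

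The only step that needs a little care is the forward direction of the absorption fact, where one must observe that a cell not lying in the difference set but active under $A$ is forced to satisfy $\delta(c_{i+N})=c(i)$, so that activity sets differing only in such cells are interchangeable; granting this, everything else is routine bookkeeping and there is no genuine obstacle.
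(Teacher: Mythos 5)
Your proof is correct and rests on the same key observation as the paper's: any activity set $A$ realizing a transition can be normalized to the difference set $D_{c,c'}$, because cells in $D_{c,c'}$ must be active while cells in $A\setminus D_{c,c'}$ are active without changing state. The paper interleaves this normalization into the two directions of the equivalence (each with a symmetry appeal) rather than isolating it as a standalone absorption fact, and it likewise disposes of the $D_{c,c'}=\es$ case via $\Gamma_\es$ and $c=c'$, so the two arguments differ only in packaging.
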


\begin{proof}\hfill
  \begin{itemize}
    \item[``only if'':] Let $C$ and $G$ be inverse to each other.
      Moreover let $c, c' \in Q^R$ be an arbitrary pair of global
      configurations with $D_{c,c'} \neq \es$.

      If $c' = \Delta_{D_{c,c'}}(c)$, then there exists a set of active
      cells $A'$ such that $c = \Gamma_{A'}(c')$ by premise. In this
      case $D_{c,c'} \subseteq A'$, since $c_i = c'_i$ for each cell
      $i \not\in A'$, and $c = \Gamma_{D_{c,c'}}(c')$, since
      $c_i = c'_i$ for each cell $i \in A' \setminus D_{c,c'}$.
      Therefore
      \begin{equation*}
        c' = \Delta_{D_{c,c'}}(c)
        \Longrightarrow
        c = \Gamma_{D_{c,c'}}(c')
        \text{.}
      \end{equation*}
      The other direction follows by symmetry.
    \item[``if'':] For each pair of global configurations
      $c, c'\in Q^R$ with $D_{c,c'} \neq \es$ let
      \[ c'=\Delta_{D_{c,c'}}(c)\iff c=\Gamma_{D_{c,c'}}(c')\text{.} \]
      Moreover let $c, c' \in Q^R$ be an arbitrary pair of global
      configurations.

      If there exists a set of active cells $A$ such that
      $c' = \Delta_A(c)$, then $c' = \Delta_{D_{c,c'}}(c)$, since
      $D_{c,c'} \subseteq A$ and $c'_i = c_i$ for each cell
      $i \in A \setminus D_{c,c'}$. In this case we get
      $c = \Gamma_{D_{c,c'}}(c')$ by premise if $D_{c,c'} \neq \es$ or
      by definition of $\Gamma_\es$ and equality $c = c'$ if
      $D_{c,c'} = \es$. Therefore
      \begin{equation*}
        \exists A\subseteq R: c' = \Delta_A(c)
        \Longrightarrow
        c = \Gamma_{D_{c,c'}}(c')
        \text{.}
      \end{equation*}
      The other direction follows by symmetry.
  \end{itemize}
\end{proof}
Analogously to definition~\ref{def:purely-invertible} we define
invertibility for fully asynchronous \CAs\ in

\begin{definition}
  A fully asynchronous \CA\ $C=(R,N,Q,\delta)$ is called
  \define{phase space invertible}, if there is a fully asynchronous
  \CA\ $G=(R,M,Q,\gamma)$ such that for each pair of global
  configurations $c, c'\in Q^R$ holds
  \[ c' \in\Delta_1(c) \iff c \in\Gamma_1(c') \text{,} \]
  which is equivalent to
  \begin{equation*}
    \exists a\in R: c' = \Delta_{\set{a}}(c)
    \iff
    \exists a'\in R: c = \Gamma_{\set{a'}}(c') \text{.}
  \end{equation*}
\end{definition}
We characterize inversion for fully asynchronous \CAs\ in

\begin{lemma}
  \label{lem:fully-inversion-characterized}
  Two fully asynchronous \CAs\ $C=(R,N,Q,\delta)$ and
  $G=(R,M,Q,\gamma)$ are inverse to each other if, and only if, for each
  pair of global configurations $c, c'\in Q^R$ with $D_{c,c'} = \set{a}$
  holds
  \[ c'=\Delta_{\set{a}}(c) \iff c=\Gamma_{\set{a}}(c') \]
  and for each global configuration $c \in Q^R$ holds
  \begin{equation*}
    \exists a\in R: c = \Delta_{\set{a}}(c)
    \iff
    \exists a'\in R: c = \Gamma_{\set{a'}}(c) \text{.}
  \end{equation*}
\end{lemma}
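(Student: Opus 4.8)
The plan is to mimic the structure of the proof of Lemma~\ref{lem:purely-inversion-characterized}, handling the two implication directions by symmetry, but with the extra bookkeeping forced by the ``exactly one active cell'' constraint. The key observation is that in the fully asynchronous setting the activity set is always a singleton, so when $c' = \Delta_{\set{a}}(c)$ the difference $D_{c,c'}$ is either empty (the active cell did not change its state, a genuine self-loop) or equal to $\set{a}$ itself. This is exactly why, unlike in the purely asynchronous case, we cannot collapse everything into a single ``$D_{c,c'} \neq \es$'' condition: the self-loops form a separate case that must be tracked explicitly, which is what the second displayed equivalence in the statement does.

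First I would prove the ``only if'' direction. Assume $C$ and $G$ are inverse to each other. For a pair $c, c'$ with $D_{c,c'} = \set{a}$: if $c' = \Delta_{\set{a}}(c)$, then since $c' \in \Delta_1(c)$ we get $c \in \Gamma_1(c')$, \ie there is $a'$ with $c = \Gamma_{\set{a'}}(c')$; but $c$ and $c'$ differ exactly at $a$, and a cell not in the (singleton) activity set is passive, so $a' = a$, giving $c = \Gamma_{\set{a}}(c')$. The reverse implication is symmetric. For the self-loop statement: if $c = \Delta_{\set{a}}(c)$ then trivially $c \in \Delta_1(c)$, hence $c \in \Gamma_1(c)$, so $c = \Gamma_{\set{a'}}(c)$ for some $a'$; again symmetry gives the converse. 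Note here the subtle point that the witnessing cell need not be the same for $C$ and $G$ — which is why the second condition is phrased with distinct quantified variables $a$ and $a'$ rather than a fixed $a$.

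Then I would prove the ``if'' direction. Assume both displayed conditions hold, and let $c, c'$ be arbitrary with $c' \in \Delta_1(c)$, say $c' = \Delta_{\set{a}}(c)$. Split on whether $c = c'$. If $c \neq c'$, then necessarily $D_{c,c'} = \set{a}$ (the only cell that could have changed is the active one), so the first condition yields $c = \Gamma_{\set{a}}(c')$, hence $c \in \Gamma_1(c')$. If $c = c'$, then $c = \Delta_{\set{a}}(c)$, so by the second condition there is $a'$ with $c = \Gamma_{\set{a'}}(c)$, \ie $c' = c = \Gamma_{\set{a'}}(c')$, hence $c \in \Gamma_1(c')$. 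The other direction of the ``iff'' to be proved, $c \in \Gamma_1(c') \Rightarrow c' \in \Delta_1(c)$, follows by interchanging the roles of $C$ and $G$ and using the symmetry of both hypotheses.

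The main obstacle, such as it is, is not difficulty but care: one must not assume the self-loop witness is preserved (so the quantifiers in the second condition genuinely need to be independent), and one must remember that a fully asynchronous step is \emph{never} the empty-activity-set step, so every ``$c \step c$'' in this setting comes with an actual designated active cell that happens to be a fixed point of $\delta$ in the relevant local configuration — this is precisely the subtlety alluded to in the paper's remark about why $A = \es$ must be disallowed, and the second condition is what repairs the characterization. Beyond that, the argument is a routine case split and a symmetry appeal, entirely parallel to Lemma~\ref{lem:purely-inversion-characterized}.
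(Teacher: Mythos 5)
Your proposal is correct and follows essentially the same route as the paper: a case split on whether the two configurations differ in more than one cell, exactly one cell, or no cell, with the observations that in a fully asynchronous step only the single active cell can change and that self-loops require the separate existential condition. The paper organizes this as one three-way case analysis on $\card{D_{c,c'}}$ covering both directions of the equivalence at once, whereas you split into ``only if'' and ``if''; the content is the same (and your treatment of the $c=c'$ case is, if anything, slightly more explicit than the paper's).
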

A comparison with the formulation in
lemma~\ref{lem:purely-inversion-characterized} shows a
complication. This is due to the fact that for fully
asynchronous \CAs\ one always needs at least one active cell.

\begin{proof}[of lemma~\ref{lem:fully-inversion-characterized}]
  Let $c, c' \in Q^R$ be an arbitrary pair of global configurations.
  There are three cases to consider:
  \begin{description}
    \item[Case 1:] $\card{D_{c,c'}} \geq 2$, \ie $c$ and $c'$ differ in
      more than one cell. Then $c' \notin \Delta_1(c)$ and
      $c \notin \Gamma_1(c')$, and hence
      \[ c' \in \Delta_1(c) \iff c \in \Gamma_1(c') \text{.} \]
    \item[Case 2:] $D_{c,c'} = \set{a}$, \ie $c$ and $c'$ differ in
      exactly one cell $a$. To reach $c'$ in one step from $c$ by
      $C$ or $c$ in one step from $c'$ by $G$ cell $a$ must be active, thus
      \begin{align*}
        c' \in \Delta_1(c) &\iff c' = \Delta_{\set{a}}(c) \text{ and}\\
        c \in \Gamma_1(c') &\iff c = \Gamma_{\set{a}}(c') \text{.}
      \end{align*}
    \item[Case 3:] $D_{c,c'} = \emptyset$, \ie $c = c'$. If $c'$
      can be reached from $c$ by $C$, \ie $c' \in \Delta_1(c)$, $c$ may
      not be reachable from $c'$ by $G$ or be reachable with a different
      active cell, and vice versa. Thus, a better characterization of
      this case is not as simple as in the other cases and is postponed
      until lemma~\ref{lem:restrict-active-cells-fully}.

      Note, that for purely asynchronous cellular automata this case is
      trivial: Simply choose the empty activity set.
  \end{description}
\end{proof}
Lemma~\ref{lem:purely-inversion-characterized} is only correct
because we allow the activity set to be empty. If we would not allow
this, we would additionally have to require for each global
configuration $c \in Q^R$ that
\begin{equation*}
  \exists a\in R: c = \Delta_{\set{a}}(c)
  \iff
  \exists a'\in R: c = \Gamma_{\set{a'}}(c) \text{.}
\end{equation*}
As we will see, this makes it much more difficult to prove completeness
and decidability.

\section{Turing Completeness}
\label{sec:completeness}

We will show that invertible purely asynchronous cellular automata are
computationally universal.

It is known that reversible deterministic synchronous \CAs\ are
computationally universal. For one-dimensional \CAs\ this can be
shown by reversibly simulating reversible Turing machines which are
computationally universal; for $d$-dimensional \CAs\ the result holds
too (more details can be found in \cite{Kari_2005_RCA_ip}).

Hence any construction which transforms a reversible synchronous \CA\
into an invertible asynchronous \CA\ is sufficient to show the
computational universality of the latter class. It turns out that
Nakamura's method \cite{Nakamura_1974_ACA_ar} of transforming the
local transition function of any synchronous \CA\ into one for an
asynchronous \CA\ while ``basically preserving its global behavior''
(irrespective of reversibility) is all that is needed for purely
asynchronous updating.

In general the following transformations which maintain local
synchronicity and guarantee invertibility are performed:
\begin{enumerate}
  \item Each cell, additionally to its current state, remembers its
    previous state and manages a three-valued time stamp.
  \item Each \emph{active} cell only changes its state, if thereby no
    information that may be needed by neighboring cells is lost. This
    is the case, if neighboring cells have the same time stamp or
    are one step ahead.
  \item Moreover, each active cell maintains its state, if the local
    configuration it observes is ``illegal''.
\end{enumerate}
More specifically, let $C=(R,N,Q,\delta)$ and $G=(R,N,Q,\gamma)$ be two synchronous \CAs\
with corresponding global transition functions $\Delta$ and
$\Gamma$. The interesting case will be that they are inverse to each
other, \ie $\Delta^{-1} = \Gamma$. Without loss of generality, we
assume that the neighborhood includes $0$ and is symmetric, \ie $N=\{
-n\mid n\in N\}$.

We will now construct an \ACA\
$\bar{C}=(R,N,\bar{Q},\bar{\delta})$ from $C$. In order to save parentheses we
will occasionally write $\ell_n$ instead of $\ell(n)$ ($\forall l \in Q^N, n \in N$)
and $c_i$ instead of $c(i)$ ($\forall c \in Q^R, i \in R$).
The \CA\ $\bar{C}$ is defined as follows:
\begin{enumerate}
\item The set of states is $\bar{Q}= Q\x Q\x\{0,1,2\}$. For
  $\bar{q}=(q_1,q_2,t)\in \bar{Q}$ we denote the first component as
  $\curr(\bar{q})$, the second component as $\old(\bar{q})$ and the
  third as $\Time(\bar{q})$.
\item Given a local configuration $\bar{\ell}$ we say, that cell $0$ is
  \define{ahead} (of its neighbors) if, and only if, there is an
  $n\in N$ such that $\Time(\bar{\ell}_0) = \Time(\bar{\ell}_n)+1$
  ($\bmod 3$).
\item If in a local configuration $\bar{\ell}$ of $\bar{C}$ cell $0$ is
  \emph{not} ahead define the \define{corresponding current local
    $C$-configuration} $\curr(\bar{\ell})$ of $C$ (not $\bar{C}$\;!) as
  \begin{align*}
    \curr(\bar{\ell})_n &=
    \begin{cases}
      \curr(\bar{\ell}_n)
        & \text{ if } \Time(\bar{\ell}_n) = \Time(\bar{\ell}_0),   \\
      \old(\bar{\ell}_n)
        & \text{ if } \Time(\bar{\ell}_n) = \Time(\bar{\ell}_0)+1. \\
    \end{cases}
  \end{align*}
\item A local configuration $\bar{\ell}$ is \define{forward movable} if
  cell $0$ is not ahead and $\old(\bar{\ell}_0)= \gamma(\curr(\bar{\ell}))$.
\item The local transition function $\bar{\delta}$ is then defined by
  \begin{align*}
    \bar{\delta}(\bar{\ell}) &=
    \begin{cases}
      (\;\delta(\curr(\bar{\ell})),\;\curr(\bar{\ell}_0),\;\Time(\bar{\ell}_0)+1\;)
        & \text { if $\bar{\ell}$ is forward movable}, \\
      \bar{\ell}_0
        & \text { otherwise}. \\
    \end{cases}
  \end{align*}
\end{enumerate}
Analogously, apply the following construction to $G$, resulting in
$\bar{G}=(R,N,\bar{Q},\bar{\gamma})$.
\begin{enumerate}
\item $\bar{Q}= Q\x Q\x\{0,1,2\}$, $\curr(\bar{q})$, $\old(\bar{q})$
  and $\Time(\bar{q})$ are defined as above.
\item Given a local configuration $\bar{\ell}$ we say, that cell $0$ is
  \define{behind} (of its neighbors) if, and only if, there is an
  $n\in N$ such that $\Time(\bar{\ell}_0)= \Time(\bar{\ell}_n)-1 $
  ($\bmod 3$).
\item If in a local configuration $\bar{\ell}$ of $\bar{C}$ cell $0$
  is \emph{not} behind define the \define{corresponding old local
    $C$-configuration} $\old(\bar{\ell})$ of $G$ (not $\bar{G}$\;!) as
  \begin{align*}
    \old(\bar{\ell})_n &=
    \begin{cases}
      \old(\bar{\ell}_n)
        & \text{ if } \Time(\bar{\ell}_n) = \Time(\bar{\ell}_0),   \\
      \curr(\bar{\ell}_n)
        & \text{ if } \Time(\bar{\ell}_n) = \Time(\bar{\ell}_0)-1. \\
    \end{cases}
  \end{align*}
\item A local configuration $\bar{\ell}$ is \define{backward movable}
  if cell $0$ is not behind and $\curr(\bar{\ell}_0)=
  \delta(\old(\bar{\ell}))$.
\item The local transition function $\bar{\delta}$ is then defined by
  \begin{align*}
    \bar{\gamma}(\bar{\ell}) &=
    \begin{cases}
      (\;\old(\bar{\ell}_0),\;\gamma(\old(\bar{\ell})),\;\Time(\bar{\ell}_0)-1\;)
        & \text { if $\bar{\ell}$ is backward movable}, \\
      \bar{\ell}_0
        & \text { otherwise}.
    \end{cases}
  \end{align*}
\end{enumerate}
We now have
\begin{theorem}
  \label{thm:nakamura-preserves_invertibility}
  If $C$ and $G$ are synchronous \CAs\ which are inverse to each other,
  then $\bar{C}$ and $\bar{G}$ are purely asynchronous inverses of each
  other.
\end{theorem}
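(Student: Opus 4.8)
The plan is to reduce the statement to a local, cell‑by‑cell computation via Lemma~\ref{lem:purely-inversion-characterized}. Writing $\bar{\Delta}_A,\bar{\Delta}$ for the step functions of $\bar{C}$ and $\bar{\Gamma}_A,\bar{\Gamma}$ for those of $\bar{G}$, that lemma reduces ``$\bar{C}$ and $\bar{G}$ are inverse to each other'' to showing, for every pair $\bar{c},\bar{c}'\in\bar{Q}^R$ with $D:=D_{\bar{c},\bar{c}'}\neq\es$, the equivalence $\bar{c}'=\bar{\Delta}_D(\bar{c})\iff\bar{c}=\bar{\Gamma}_D(\bar{c}')$. The definition of $\bar{G}$ is the time reversal of that of $\bar{C}$ (``ahead'' $\leftrightarrow$ ``behind'', ``forward movable'' $\leftrightarrow$ ``backward movable'', $\curr\leftrightarrow\old$, incrementing $\leftrightarrow$ decrementing time stamps), so I would prove only the implication ``$\Rightarrow$''; the converse follows by the same argument run backwards in time.

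So suppose $\bar{c}'=\bar{\Delta}_D(\bar{c})$ with $D=D_{\bar{c},\bar{c}'}$. For $i\notin D$ there is nothing to do, as $\bar{c}_i=\bar{c}'_i$. Fix $i\in D$ and put $t:=\Time(\bar{c}_i)$. Since $\bar{c}'_i=\bar{\delta}(\bar{c}_{i+N})\neq\bar{c}_i$, the local configuration $\bar{c}_{i+N}$ is forward movable; hence cell $i$ is not ahead in $\bar{c}$, so $\Time(\bar{c}_{i+n})\in\{t,t+1\}$ for all $n\in N$, and $\old(\bar{c}_i)=\gamma(\curr(\bar{c}_{i+N}))$, while $\bar{c}'_i=\bigl(\delta(\curr(\bar{c}_{i+N})),\ \curr(\bar{c}_i),\ t+1\bigr)$. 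The one point that needs an idea is the observation that \emph{two neighbouring cells both lying in $D$ carry the same time stamp in $\bar{c}$}: if $i,j\in D$ with $j\in i+N$, then $\bar{c}_{j+N}$ is forward movable as well, so cell $j$ is not ahead in $\bar{c}$, and since $N$ is symmetric $i\in j+N$; thus $\Time(\bar{c}_j)\neq\Time(\bar{c}_i)+1$, which together with $\Time(\bar{c}_j)\in\{t,t+1\}$ forces $\Time(\bar{c}_j)=t$.

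Next I would describe $\bar{c}'$ on the neighbourhood of $i$. For a neighbour $j=i+n\notin D$ we have $\bar{c}'_j=\bar{c}_j$, hence $\Time(\bar{c}'_j)=\Time(\bar{c}_j)\in\{t,t+1\}$. For $j\in D$ the observation gives $\Time(\bar{c}_j)=t$, and, $\bar{c}_{j+N}$ being forward movable for the same reason as $\bar{c}_{i+N}$, we get $\bar{c}'_j=\bigl(\delta(\curr(\bar{c}_{j+N})),\ \curr(\bar{c}_j),\ t+1\bigr)$, so $\Time(\bar{c}'_j)=t+1$. In every case $\Time(\bar{c}'_{i+n})\in\{t,t+1\}=\{\Time(\bar{c}'_i)-1,\ \Time(\bar{c}'_i)\}$, so cell $i$ is \emph{not behind} in $\bar{c}'$ and $\old(\bar{c}'_{i+N})$ is defined. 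Comparing the defining case split of $\old(\bar{c}'_{i+N})$ with that of $\curr(\bar{c}_{i+N})$ on these three cases (a passive neighbour of time $t$; a passive neighbour of time $t+1$; an active neighbour, necessarily of time $t$) yields, coordinate by coordinate, $\old(\bar{c}'_{i+N})=\curr(\bar{c}_{i+N})$.

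It then remains to put the pieces together. From $\curr(\bar{c}'_i)=\delta(\curr(\bar{c}_{i+N}))=\delta(\old(\bar{c}'_{i+N}))$ and cell $i$ not being behind in $\bar{c}'$, the local configuration $\bar{c}'_{i+N}$ is backward movable, so
\begin{align*}
  \bar{\gamma}(\bar{c}'_{i+N})
    &= \bigl(\old(\bar{c}'_i),\ \gamma(\old(\bar{c}'_{i+N})),\ \Time(\bar{c}'_i)-1\bigr) \\
    &= \bigl(\curr(\bar{c}_i),\ \gamma(\curr(\bar{c}_{i+N})),\ t\bigr)
     = \bigl(\curr(\bar{c}_i),\ \old(\bar{c}_i),\ t\bigr)
     = \bar{c}_i \text{.}
\end{align*}
Thus $\bar{\Gamma}_D(\bar{c}')(i)=\bar{c}_i$ for $i\in D$, and $\bar{\Gamma}_D(\bar{c}')(i)=\bar{c}'_i=\bar{c}_i$ for $i\notin D$, i.e. $\bar{c}=\bar{\Gamma}_D(\bar{c}')$, as required. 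The genuinely routine parts are the verification of the stated effect of $\bar{\delta}$ on forward‑movable configurations and the coordinate‑by‑coordinate case comparison; the step I expect to require the most care is the time‑stamp bookkeeping modulo $3$ — in particular checking that the updated configuration never becomes ``behind'', which is precisely what the three‑valued clock together with the observation about adjacent active cells guarantees.
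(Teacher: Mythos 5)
Your proof is correct and follows essentially the same route as the paper: the reduction to Lemma~\ref{lem:purely-inversion-characterized}, the detailed forward implication with the converse by symmetry, and the verification that for each $i\in D$ the local configuration $\bar{c}'_{i+N}$ is backward movable via the time-stamp bookkeeping and $\old(\bar{c}'_{i+N})=\curr(\bar{c}_{i+N})$. Your preliminary observation that adjacent cells of $D$ share a time stamp is just a repackaging of the paper's case analysis (its Case~2 argument that a neighbour one step ahead cannot change state), so the two proofs are materially identical.
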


\begin{proof}
  Let $c,c'\in \bar{Q}^R$ be two arbitrary global configurations and
  \[
  D = D_{c,c'} = \{ i\in R \mid c_i \not= c'_i \}
  \]
  their difference. According to
  lemma~\ref{lem:purely-inversion-characterized} it is sufficient to
  prove
  \[
  c' = \bar{\Delta}_D(c) \iff c = \bar{\Gamma}_D(c') \text{.}
  \]
  We will prove in detail that
  \[
  c' = \bar{\Delta}_D(c) \Longrightarrow c = \bar{\Gamma}_D(c') \text{.}
  \]
  Because of the symmetry of the constructions is it not surprising
  that a proof of the inverse implication can be given analogously.

  Now let $c' = \bar{\Delta}_D(c)$. Consider an arbitrary cell $i\in D$.
  Since $c_i \not= c'_i = \bar{\delta}(c_{i+N})$ the local configuration
  $c_{i+N}$ is forward movable, because otherwise cell $i$ would
  maintain its state by definition of $\bar{\delta}$. Therefore
  \begin{align*}
    \forall n\in N \colon \Time(c_{i+n}) &\in \{\Time(c_i), \Time(c_i)+1\}, \\
    \old(c_i) &= \gamma(\curr(c_{i+N})) \text{,} \\
    \curr(c'_i)&= \delta(\curr(c_{i+N})) \text{,}  \\
    \old(c'_i) &= \curr(c_i) \text{, and}  \\
    \Time(c'_i) &= \Time(c_i)+1 \text{.}
  \end{align*}
  Now consider an arbitrary neighbor $n\in N$. Since we have assumed
  that $N$ is symmetric, cell $i$ is a neighbor of cell $i+n$.

  There are two possible cases:
  \begin{description}
  \item[Case 1:] $\Time(c_{i+n})=\Time(c_i)$.

    \begin{description}
    \item[Case 1.1:] $i+n\in D$: Then $c_{i+n}\not= c'_{i+n}$ and therefore
      \begin{align*}
        \Time(c'_{i+n}) &= \Time(c_{i+n})+1 = \Time(c_{i})+1 = \Time(c'_i) \text{ and} \\
        \curr(c_{i+N})(n) &= \curr(c_{i+n}) = \old(c'_{i+n}) = \old(c'_{i+N})(n) \text{.}
      \end{align*}
    \item[Case 1.2:] $i+n\not\in D$: Then $c_{i+n} = c'_{i+n}$ and therefore
      \begin{align*}
        \Time(c'_{i+n}) &= \Time(c_{i+n}) = \Time(c_{i}) = \Time(c'_i)-1 \text{ and} \\
        \curr(c_{i+N})(n) &= \curr(c_{i+n}) = \curr(c'_{i+n}) = \old(c'_{i+N})(n) \text{.}
      \end{align*}
    \end{description}
  \item[Case 2:] $\Time(c_{i+n})=\Time(c_i)+1$. Since $i$ is a
    neighbor of $i+n$ and $\Time(c_i)=\Time(c_{i+n})-1$ we have
    $c'_{i+n}=c_{i+n}$. Therefore
    \begin{align*}
      \Time(c'_{i+n}) &= \Time(c_{i+n}) = \Time(c_{i})+1 = \Time(c'_i) \text{ and} \\
      \curr(c_{i+N})(n) &= \old(c_{i+n}) = \old(c'_{i+n}) = \old(c'_{i+N})(n) \text{.}
    \end{align*}
  \end{description}
  Taken together we always have
  \begin{align*}
    \Time(c'_{i+n}) &\in \{\Time(c'_i), \Time(c'_i)-1\} \text{ and} \\
    \curr(c_{i+N})(n) &= \old(c'_{i+N})(n) \text{,}
  \end{align*}
  and therefore $\curr(c'_i) = \delta(\curr(c_{i+N})) =
  \delta(\old(c'_{i+N})) $. As a consequence $c'_{i+N}$ is backward
  movable and hence
  \begin{align*}
    \curr(c_i) &= \old(c'_i) = curr( \Gamma_D(c')_i ) \text{,} \\
    \old(c_i)  &= \gamma(\curr(c_{i+N})) = \gamma(\old(c'_{i+N})) = \old(\Gamma_D(c')_i) \text{, and} \\
    \Time(c_i) &= \Time(c'_i)-1 = \Time(\Gamma_D(c')_i) \text{,}
  \end{align*}
  which is a long-winded way of saying $c_i = \Gamma_D(c')_i $.
\end{proof}
The above proof is incorrect if we restrict purely \ACAs\ to
non-empty sets of active cells. The problematic case happens when the
minimal difference $D$ is empty. In this case we cannot use it as the
set of active cells. But nevertheless $c' = c$ may be reachable from
$c$ by $\bar{C}$ in one step with a non-empty set of active cells and
we need to prove that the same holds for $\bar{G}$. We do not know
whether this is always the case.

One part of the problem is that even
if $c$ is a global configuration in which the registers old, current and time
do not conform we need to show the property. This could be solved by
adding more restrictions that do not corrupt the simulation as in the
second part of the definitions of backward and forward movable.

\section{Decidability}
\label{sec:decidability}

We will show that phase space invertibility is decidable for
arbitrary-dimensional purely asynchronous \CAs\ and one-dimensional
fully asynchronous \CAs\ by presenting two algorithms which always
terminate and for any given automaton find an inverse if one exists.

For any given \ACA\ $C$ these algorithms only look for inverses among
the finitely many automata with the same neighborhood as $C$. This is
justified in section~\ref{subsec:neighborhood} where we prove that if
$C$ is invertible there is an inverse with the same neighborhood.

To decide whether two \ACAs\ $C$ and $G$ are inverse to each other the
algorithms only verify that $C$ and $G$ are inverse to each other on
a subspace of the phase space that is restricted to global transitions
in which only cells of a fixed finite subset of cells are active. In
sections~\ref{subsec:deciding-purely} and~\ref{subsec:deciding-fully}
we prove that this is sufficient for arbitrary-dimensional purely and
one-dimensional fully asynchronous \CAs\ respectively.

\subsection{Inverse Neighborhood and Translation Invariance}
\label{subsec:neighborhood}

Consider any set of all global configurations that agree on cell $0$
and all its neighbors. For each of these global configurations make
the global transition step where only cell $0$ is active. All
resulting global configurations again agree on cell $0$ and all its
neighbors.

An inverse can undo all these transitions, whereby at most cell $0$ is
active. Regardless of the states of cells besides cell $0$ and its
neighborhood, the inverse does the same and can thus not benefit from a
larger neighborhood.

\begin{definition}
  We say that a \CA\ $C = (R, N, Q, \delta)$ has \define{minimal
    neighborhood} if it has no dummy neighbors, meaning for each
  neighbor $n \in N$ there exist local configurations $\ell, \ell' \in
  Q^N$ with $\ell|_{N \setminus \set{n}} = \ell'|_{N \setminus \set{n}}$
  and $\ell_{n} \neq \ell'_{n}$ such that $\delta(\ell) \neq
  \delta(\ell')$.
\end{definition}
With this term we can rigorously state and prove

\begin{lemma}
  Two inverse purely or fully asynchronous \CAs\
  $C = (R, N, Q, \delta)$ and $G = (R, M, Q, \gamma)$
  with minimal neighborhoods have the same neighborhood.
\end{lemma}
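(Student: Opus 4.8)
The plan is to show that each neighbor of $C$ must also be a neighbor of $G$ (and then conclude by symmetry). Fix a neighbor $n \in N$ of $C$. Since $C$ has minimal neighborhood, there are local configurations $\ell, \ell' \in Q^N$ differing only at position $n$ with $\delta(\ell) \neq \delta(\ell')$. The idea is to embed these two local configurations into two global configurations $c, c'$ that agree everywhere except at cell $n$, arranged so that in both $c$ and $c'$ the cell $0$ observes $\ell$ respectively $\ell'$; then make cell $0$ active. First I would need the embedding to be consistent: choose $c$ so that $c_{0+N} = \ell$, and $c'$ to agree with $c$ except $c'(n)$ is altered so that $c'_{0+N} = \ell'$. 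Then $D_{c,c'} = \set{n}$ (or possibly empty if $\ell(n) = \ell'(n)$, but minimality forces $\ell_n \neq \ell'_n$, so $D_{c,c'} = \set{n}$ exactly).

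Now apply $\Delta_{\set{0}}$ to both $c$ and $c'$: since $\delta(\ell) \neq \delta(\ell')$, the resulting configurations $d = \Delta_{\set{0}}(c)$ and $d' = \Delta_{\set{0}}(c')$ differ at cell $0$. They may or may not still differ at cell $n$, but in any case cell $0 \in D_{d,d'}$. Because $C$ and $G$ are inverse to each other, by lemma~\ref{lem:purely-inversion-characterized} (purely asynchronous case) or lemma~\ref{lem:fully-inversion-characterized} (fully asynchronous case), $G$ must be able to undo these transitions: $c = \Gamma_{\set{0}}(d)$ and $c' = \Gamma_{\set{0}}(d')$ — here one uses that the only cell that changed is cell $0$, so $D_{c,d} \subseteq \set{0}$ and likewise $D_{c',d'} \subseteq \set{0}$, and in the fully asynchronous case one handles the degenerate $D = \es$ subcase via the auxiliary self-loop condition. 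Hence $\gamma(d_{0+M}) = c(0)$ and $\gamma(d'_{0+M}) = c'(0) = c(0)$, so $\gamma$ yields the same value on $d_{0+M}$ and $d'_{0+M}$.

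The key step is then the contrapositive: suppose $n \notin M$. Then $d_{0+M}$ and $d'_{0+M}$ would depend on $d$ and $d'$ only through cells in $M$, and outside cell $0$ the configurations $d$ and $d'$ agree except possibly at $n$ — but $n \notin M$, so $d_{0+M}$ and $d'_{0+M}$ differ only at position $0$ (if at all). I must arrange the construction so that this forces a contradiction; concretely, I would pick the embedding and, if necessary, vary the "free" cells so that $d(0) \neq d'(0)$ propagates into a genuine disagreement that $\gamma$ cannot see. The cleanest route: since cell $0$ is a neighbor of itself ($0 \in M$ by the convention that neighborhoods contain $0$, or by minimality applied to $G$ if needed), the positions where $d_{0+M}$ and $d'_{0+M}$ can differ are exactly $M \cap (\set{0} \cup \set{n} - 0)$; with $n \notin M$ this is just position $0$. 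So $\gamma$ sees the difference at $d(0) \neq d'(0)$ and yet must return the same value $c(0)$ on both — which is not yet a contradiction by itself. The obstacle, and the part I would spend the most care on, is choosing $c$ (exploiting the freedom in the cells outside $0+N$) so that the pair $(d(0), d'(0))$ occurring here, with all other observed cells fixed, witnesses that $0$ is a dummy neighbor of $G$ — contradicting minimality of $G$'s neighborhood. Once that witness is produced, the argument closes, and swapping the roles of $C$ and $G$ gives $M \subseteq N$, hence $N = M$.
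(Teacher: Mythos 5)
Your setup---embedding $\ell,\ell'$ into global configurations differing only at $n$, firing cell $0$, and comparing what $\gamma$ must do on $d_{0+M}$ versus $d'_{0+M}$---is exactly the paper's opening move, and you correctly observe that ``$\gamma$ returns the same value on two arguments differing only at position $0$'' is not by itself a contradiction. The problem is how you propose to close. You want to conclude that $0$ is a dummy neighbor of $G$, contradicting minimality of $M$. But minimality is an \emph{existential} condition per neighbor: $0\in M$ is non-dummy as soon as \emph{some} pair of local configurations differing only at $0$ is separated by $\gamma$. Exhibiting one pair on which $\gamma$ agrees proves nothing, and varying the free cells cannot turn a single non-separating pair into a proof that \emph{all} pairs are non-separating. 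This route also presupposes $0\in M$, which you justify by ``the convention that neighborhoods contain $0$''---but the lemma makes no such assumption about $G$ (the paper's proof devotes a whole case to $\mathring{n}=0\notin M$), and minimality certainly does not force $0\in M$. A further soft spot: you assert both $c=\Gamma_{\set{0}}(d)$ and $c'=\Gamma_{\set{0}}(d')$, but if, say, $\delta(\ell')=\ell'(0)$ then $d'=c'$, the inverse condition is satisfied by the empty (or, in the fully asynchronous case, some far-away) active set, and $\gamma(d'_{0+M})=c'(0)$ is not forced; the paper only obtains this for one of the two configurations (without loss of generality) and must work around the other.

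The missing idea is to invoke the inverse property a \emph{second} time, in the forward direction, on a third, hybrid configuration. The paper defines $\mathring{d}$ to agree with $\check{d}$ off cell $0$ and with $\hat{d}$ at cell $0$; since $\mathring{n}\notin M$ this yields $\mathring{d}_{0+M}=\hat{d}_{0+M}$, hence $\Gamma_{\set{0}}(\mathring{d})=\check{c}$, and inverseness then forces $\mathring{d}=\Delta_{\set{0}}(\check{c})=\check{d}$, contradicting $\hat{d}_0\neq\check{d}_0$. The contradiction thus comes from the phase-space inverse relation itself (a configuration acquiring a forbidden predecessor/successor), not from the minimality of $G$'s neighborhood; minimality of $N$ is used only once, to produce the initial witness pair. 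You would need to restructure the end of your argument along these lines and add the separate case analysis for $n=0$ versus $n\neq 0$.
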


\begin{proof}[by contradiction]
  Assume $N \neq M$. Without loss of generality, let $N \setminus M
  \neq \emptyset$.  Then there exists a neighbor $\mathring{n} \in N
  \setminus M$. Because the neighborhoods are minimal there also exist
  local configurations $\hat{\ell}, \check{\ell} \in Q^N$ with
  $\hat{\ell}|_{N \setminus \set{\mathring{n}}} = \check{\ell}|_{N
    \setminus \set{\mathring{n}}}$ and $\hat{\ell}_{\mathring{n}} \neq
  \check{\ell}_{\mathring{n}}$ such that $\delta(\hat{\ell}) \neq
  \delta(\check{\ell})$.

  Choose global configurations $\hat{c}, \check{c} \in Q^R$ with
  \begin{align*}
    \hat{c}_{\mathring{n}} = \hat{\ell}_{\mathring{n}} \text{ as }&\text{well as } \check{c}_{\mathring{n}} = \check{\ell}_{\mathring{n}} \text{,}\\
    \hat{c}|_{N \setminus \set{\mathring{n}}} = \hat{\ell}|_{N \setminus \set{\mathring{n}}} &= \check{\ell}|_{N \setminus \set{\mathring{n}}} = \check{c}|_{N \setminus \set{\mathring{n}}} \text{, and}\\
    \hat{c}|_{R \setminus N} &= \check{c}|_{R \setminus N} \text{.}
  \end{align*}
  For the global configurations $\hat{d} = \Delta_{\set{0}}(\hat{c})$ and $\check{d} = \Delta_{\set{0}}(\check{c})$ we obtain
  \begin{align*}
    \hat{d}_0 = \delta(\hat{\ell}) &\neq \delta(\check{\ell}) = \check{d}_0 \text{,}\\
    \hat{d}_{\mathring{n}} = \hat{\ell}_{\mathring{n}} &\neq \check{\ell}_{\mathring{n}} = \check{d}_{\mathring{n}} \text{,}\\
    \hat{d}|_{R \setminus \set{0, \mathring{n}}} &= \check{d}|_{R \setminus \set{0, \mathring{n}}} \text{, and}\\
    \hat{d}|_{R \setminus \set{0}} = \hat{c}|_{R \setminus \set{0}} \text{ as }&\text{well as } \check{d}|_{R \setminus \set{0}} = \check{c}|_{R \setminus \set{0}} \text{.}
  \end{align*}
  Because $C$ and $G$ are inverse to each other we furthermore have
  \[ \hat{c} = \Gamma_{D_{\hat{c},\hat{d}}}(\hat{d}) \text{ as well as } \check{c} = \Gamma_{D_{\check{c},\check{d}}}(\check{d}) \text{.} \]
  Note that $D_{\hat{c},\hat{d}}, D_{\check{c},\check{d}} \subseteq \set{0}$.
  See Figure~\ref{fig:case-0} for a graphical representation of the situation.
  \begin{figure}[ht]
    \centering
    \begin{tikzpicture}[node distance=2cm, auto]
      \node (c1) {$\hat{c}$};
      \node (d1) [below of=c1] {$\hat{d}$};
      \node (c2) [right of=c1] {$\check{c}$};
      \node (d2) [below of=c2] {$\check{d}$};
      \draw[-, dashed, bend left] (c1) to node {$\neq \mathring{n}$} (c2);
      \draw[->] (c1) to node {$\Delta_{\set{0}}$} (d1);
      \draw[<-, bend right] (c1) to node [swap] {$\Gamma_{D_{\hat{c},\hat{d}}}$} (d1);
      \draw[->] (c2) to node [swap] {$\Delta_{\set{0}}$} (d2);
      \draw[<-, bend left] (c2) to node {$\Gamma_{D_{\check{c},\check{d}}}$} (d2);
      \draw[-, dashed, bend right] (d1) to node [swap] {$\neq 0, \mathring{n}$} (d2);
    \end{tikzpicture}
    \caption{The labels on the dashed lines indicate on which cells
      the connected global configurations differ, the solid lines
      show possible global transitions.}
    \label{fig:case-0}
  \end{figure}

  \begin{description}
    \item[Case 1:] $\mathring{n} \neq 0$.
      Then $\hat{c}_0 = \check{c}_0$. With $\hat{d}_0 \neq \check{d}_0$
      it follows that $\hat{d}_0 \neq \hat{c}_0$ or $\check{d}_0 \neq \check{c}_0$.
      Without loss of generality, let $\hat{d}_0 \neq \hat{c}_0$.
      Then $\hat{c} = \Gamma_{\set{0}}(\hat{d})$ and thus $\hat{c}_0 = \gamma(\hat{d}_{0+M})$.

      Consider global configuration $\mathring{d} \in Q^R$ with
      $\mathring{d}|_{R \setminus \set{0}} = \check{d}|_{R \setminus \set{0}}$
      and $\mathring{d}_0 = \hat{d}_0$. From $\mathring{n} \notin M$ we
      have $\mathring{d}_{0+M} = \hat{d}_{0+M}$ and therefore
      \begin{align*}
        &\Gamma_{\set{0}}(\mathring{d})_0 = \gamma(\mathring{d}_{0+M}) = \gamma(\hat{d}_{0+M}) = \hat{c}_0 = \check{c}_0 \text{ and}\\
        &\Gamma_{\set{0}}(\mathring{d})|_{R \setminus \set{0}} = \mathring{d}|_{R \setminus \set{0}} = \check{d}|_{R \setminus \set{0}} = \check{c}|_{R \setminus \set{0}} \text{,}
      \end{align*}
      which is a long-winded way of saying $\check{c} = \Gamma_{\set{0}}(\mathring{d})$.

      From $\mathring{d}_0 = \hat{d}_0 \neq \hat{c}_0 = \check{c}_0$
      we conclude $\mathring{d} = \Delta_{\set{0}}(\check{c}) = \check{d}$
      and $\check{d}_0 = \mathring{d}_0 = \hat{d}_0$, which contradicts
      $\hat{d}_0 \neq \check{d}_0$; see Figure~\ref{fig:case-1}.
      \begin{figure}[ht]
        \centering
        \begin{tikzpicture}[node distance=2cm, auto]
          \node (c1) {$\hat{c}$};
          \node (d1) [below of=c1] {$\hat{d}$};
          \node (c2) [right of=c1] {$\check{c}$};
          \node (d2) [below of=c2] {$\check{d}$};
          \node (d3) [below of=d1] at ($(d1)!.5!(d2)$) {$\mathring{d}$};
          \draw[-, dashed, bend left] (c1) to node {$\neq \mathring{n}$} (c2);
          \draw[->, bend right] (c1) to node [swap] {$\Delta_{\set{0}}$} (d1);
          \draw[<-] (c1) to node {$\Gamma_{\set{0}}$} (d1);
          \draw[->, bend right] (c2) to node {$\Delta_{\set{0}}$} (d2);
          \draw[-, dashed, bend right] (d1) to node {$\neq 0, \mathring{n}$} (d2);
          \draw[-, dashed, bend right] (d1) to node [swap] {$\neq \mathring{n}$} (d3);
          \draw[-, dashed, bend left] (d2) to node [swap] {$\neq 0$} (d3);
          \draw[->, bend right=90] (d3) to node [swap] {$\Gamma_{\set{0}}$} (c2);
        \end{tikzpicture}
        \caption{The labels on the dashed lines indicate on which cells
          the connected global configurations differ, the solid lines
          show possible global transitions.}
        \label{fig:case-1}
      \end{figure}
    \item[Case 2:] $\mathring{n} = 0$. Then $0 \notin M$ and hence
      $\hat{d}_{0+M} = \check{d}_{0+M}$ as well as $\hat{c}_{0+M} = \check{c}_{0+M}$.
      \begin{description}
        \item[Case 2.1:] $\hat{c}_0 \neq \hat{d}_0$ or $\check{c}_0 \neq \check{d}_0$.
          Without loss of generality, let $\hat{c}_0 \neq \hat{d}_0$. Then
          $\hat{c} = \Gamma_{\set{0}}(\hat{d})$ and therefore $\hat{c}_0 = \gamma(\hat{d}_{0+M})$.
          With $\hat{d}_{0+M} = \check{d}_{0+M}$ it follows that
          $\hat{c}_0 = \gamma(\check{d}_{0+M})$ and therefore
          $\check{c} \neq \hat{c} = \Gamma_{\set{0}}(\check{d})$. Hence
          $\check{c} = \Gamma_\emptyset(\check{d}) = \check{d}$ and
          $\hat{c} = \Gamma_{\set{0}}(\check{c})$.

          Finally, from $\hat{c}_0 \neq \check{c}_0$ we have
          $\check{c} = \Delta_{\set{0}}(\hat{c})$ and conclude
          $\check{d} = \hat{d}$, which contradicts $\hat{d}_0 \neq
          \check{d}_0$; see Figure~\ref{fig:case-2}.
          \begin{figure}[ht]
            \centering
            \begin{tikzpicture}[node distance=2cm, auto]
              \node (c1) {$\hat{c}$};
              \node (d1) [below of=c1] {$\hat{d}$};
              \node (c2) [right of=c1] {$\check{c}$};
              \node (d2) [below of=c2] {$\check{d}$};
              \draw[-, dashed, bend right] (c1) to node {$\neq 0$} (c2);
              \draw[-, dashed, bend right] (d1) to node {$\neq 0$} (d2);
              \draw[->, bend right] (c1) to node [swap] {$\Delta_{\set{0}}$} (d1);
              \draw[<-] (c1) to node {$\Gamma_{\set{0}}$} (d1);
              \draw[->, bend left=90] (c2) to node {$\Delta_{\set{0}}$} (d2);
              \draw[rounded corners] (c2.north west) rectangle (d2.south east);
              \draw[->, bend right] (c2) to node [swap] {$\Gamma_{\set{0}}$} (c1);
            \end{tikzpicture}
            \caption{The labels on the dashed lines indicate on which
              cells the connected global configurations differ, the
              solid lines show possible global transitions, and the
              rounded rectangle signifies that the enclosed global
              configurations are found to be identical during the
              proof.}
            \label{fig:case-2}
          \end{figure}
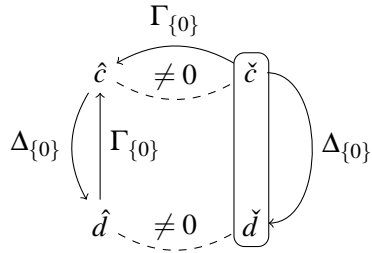

        \item[Case 2.2:] $\hat{c}_0 = \hat{d}_0$ and $\check{c}_0 = \check{d}_0$.
          Then $\hat{c} = \hat{d}$ and $\check{c} = \check{d}$. Consider
          global configuration $b = \Gamma_{\set{0}}(\hat{c}) = \Gamma_{\set{0}}(\check{c})$.
          From $\hat{c}_0 \neq \check{c}_0$ it follows that $b_0 \neq \hat{c}_0$
          or $b_0 \neq \check{c}_0$.

          Without loss of generality, let $b_0 \neq \hat{c}_0$. Then
          $\hat{c} = \Delta_{\set{0}}(b)$ and with $\check{c} \neq \hat{c}$
          also $\check{c} = \Delta_\emptyset(b) = b$. Therefore
          $\hat{c} = \Delta_{\set{0}}(\check{c}) = \check{d} = \check{c}$,
          which contradicts $\hat{c}_0 \neq \check{c}_0$; see Figure~\ref{fig:case-3}.
          \begin{figure}[ht]
            \centering
            \begin{tikzpicture}[node distance=2cm, auto]
              \node (c1) {$\hat{c}$} edge [out=210, in=240, loop] node[left] {$\Delta_{\set{0}}$} (c1);
              \node (c2) [right of=c1] {$\check{c}$} edge [out=330, in=300, loop] node[right] {$\Delta_{\set{0}}$} (c2);
              \node (b) [above of=c1] at ($(c1)!.5!(c2)$) {$b$};
              \draw[-, dashed, bend right] (c1) to node {$\neq 0$} (c2);
              \draw[->, bend left=60] (c1) to node {$\Gamma_{\set{0}}$} (b);
              \draw[<-] (c1) to node [near end, swap] {$\Delta_{\set{0}}$} (b);
              \draw[->, bend right=60] (c2) to node [swap] {$\Gamma_{\set{0}}$} (b);
            \end{tikzpicture}
            \caption{The labels on the dashed lines indicate on which
              cells the connected global configurations differ, the
              solid lines show possible global transitions.}
            \label{fig:case-3}
          \end{figure}
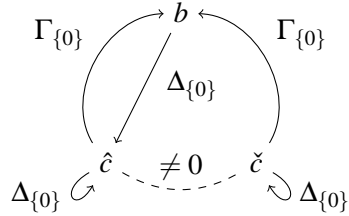
        \end{description}
  \end{description}
  Because every possible case led to a contradiction, the assumption
  must be false, meaning $N = M$.
\end{proof}
Since dummy neighbors can be added and removed without affecting the
phase space we get

\begin{corollary}
  For each invertible \ACA\ exists an inverse with the same
  neighborhood.
  \qed
\end{corollary}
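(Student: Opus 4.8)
The plan is to deduce the corollary from the preceding lemma together with the observation — already hinted at above — that adding or deleting a dummy neighbor changes neither the local transition behaviour at any individual cell nor, consequently, the global step relation. First I would make that observation precise. If $n$ is a dummy neighbor of $(R, N, Q, \delta)$, then $\delta$ does not depend on its $n$-th argument, so it factors as $\delta(\ell) = \delta^\flat(\ell|_{N \setminus \set{n}})$ for a well-defined $\delta^\flat \colon Q^{N \setminus \set{n}} \to Q$; conversely, any $\delta$ can be extended to $N \cup \set{n}$ by ignoring the new coordinate, and then $n$ is a dummy neighbor. In both directions the value $\delta(c_{i+N})$ is the same for every global configuration $c$ and every cell $i$, so $\Delta_A$ is literally the same function for every $A \subseteq R$; hence the purely asynchronous step relation $\Delta$ and the fully asynchronous step relation $\Delta_1$ are unchanged. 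Iterating the deletion of dummy neighbors therefore terminates (each deletion strictly shrinks the finite set $N$, and one checks that deleting one dummy neighbor leaves the dummy status of the others intact, so the process stops exactly at the set of non-dummy neighbors) and yields a \CA\ with minimal neighborhood and the same phase space as the original one.

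Now let $C = (R, N, Q, \delta)$ be an invertible \ACA\ and let $G = (R, M, Q, \gamma)$ be an inverse of it. Deleting all dummy neighbors from $C$ and from $G$ produces \CAs\ $C^\flat = (R, N^\flat, Q, \delta^\flat)$ and $G^\flat = (R, M^\flat, Q, \gamma^\flat)$ with minimal neighborhoods $N^\flat \subseteq N$ and $M^\flat \subseteq M$. By the observation above, $C^\flat$ has the same phase space as $C$ and $G^\flat$ the same phase space as $G$, so $C^\flat$ and $G^\flat$ are again inverse to each other; since both have minimal neighborhood, the lemma gives $N^\flat = M^\flat$.

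Finally, re-add to $G^\flat$, as dummy neighbors, the cells of $N \setminus N^\flat$; this makes sense because $N^\flat \subseteq N$. The result is a \CA\ $G' = (R, N, Q, \gamma')$ with neighborhood exactly $N$ and, by the observation above, the same phase space as $G^\flat$, hence the same phase space as $G$. Therefore $G'$ is an inverse of $C$ with the same neighborhood as $C$, which is what was to be shown.

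I do not anticipate a genuine obstacle: the mathematical weight of the corollary rests entirely on the lemma, and everything else is bookkeeping. The one point deserving a line of care is uniformity — one must verify that $\Delta_A$ (and hence $\Delta$ and $\Delta_1$) is \emph{exactly}, not merely approximately, unchanged under adding or removing a dummy neighbor — so that the argument applies verbatim to purely and to fully asynchronous \CAs\ alike.
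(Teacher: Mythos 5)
Your proposal is correct and follows exactly the route the paper intends: the paper dispatches the corollary with the single remark that ``dummy neighbors can be added and removed without affecting the phase space'' and an appeal to the preceding lemma, and your argument just spells out that reduction (strip both automata to minimal neighborhoods, apply the lemma to get $N^\flat = M^\flat$, then pad the inverse back up to $N$). The extra care you take --- checking that $\Delta_A$ is literally unchanged and that deleting one dummy neighbor preserves the dummy status of the others --- is exactly the bookkeeping the paper leaves implicit.
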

Let us now briefly consider translation invariance.

\begin{definition}
  For each \define{translation vector} $j \in R$ the map
  $\tau_j: Q^R \to Q^R$ with
  \[ \forall i \in R\colon \tau_j(c)_i = c_{i+j} \]
  is called \define{($j$-)translation}.
\end{definition}
Similar to synchronous, asynchronous \CAs\ are also translation
invariant -- we only need to translate the set of active cells as well.

\begin{lemma}
  For each translation vector $j \in R$ and set of active cells
  $A \subseteq R$ holds
  \[ \tau_j \circ \Delta_A = \Delta_{A - j} \circ \tau_j \text{,} \]
  where $A - j = \set{a - j \mid a \in A}$.
  \qed
\end{lemma}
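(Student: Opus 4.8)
The plan is to prove the translation-invariance identity $\tau_j \circ \Delta_A = \Delta_{A-j} \circ \tau_j$ by a direct pointwise computation, checking that both sides agree on every cell $i \in R$ when applied to an arbitrary global configuration $c \in Q^R$. Two auxiliary observations will do almost all of the work: first, translations commute with the ``observation'' operation in the sense that $\tau_j(c)_{i+N} = c_{(i+j)+N}$ as local configurations (this is just the calculation $\tau_j(c)_{i+N}(n) = \tau_j(c)_{i+n} = c_{(i+n)+j} = c_{(i+j)+n}$ for every $n \in N$); and second, membership in the translated activity set satisfies $i \in A - j \iff i + j \in A$, which is immediate from the definition $A - j = \set{a - j \mid a \in A}$.

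Write out $\bigl(\tau_j \circ \Delta_A\bigr)(c)_i = \Delta_A(c)_{i+j}$, which by the definition of $\Delta_A$ equals $\delta(c_{(i+j)+N})$ if $i+j \in A$ and equals $c_{i+j}$ otherwise. On the other side, $\bigl(\Delta_{A-j} \circ \tau_j\bigr)(c)_i = \Delta_{A-j}(\tau_j(c))_i$ equals $\delta\bigl(\tau_j(c)_{i+N}\bigr)$ if $i \in A - j$ and equals $\tau_j(c)_i = c_{i+j}$ otherwise. Now apply the two observations: the case conditions $i + j \in A$ and $i \in A - j$ coincide, and in the ``active'' case $\delta\bigl(\tau_j(c)_{i+N}\bigr) = \delta\bigl(c_{(i+j)+N}\bigr)$, while in the ``passive'' case both sides give $c_{i+j}$. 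Hence the two functions agree on every cell $i$, so they are equal.

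I do not expect any genuine obstacle here; the statement is essentially bookkeeping about how indices shift, and the only thing that requires a moment of care is getting the direction of the shift right (it is $A - j$, not $A + j$, because translating the configuration by $j$ moves the cell that observes neighborhood-local data at the origin to the one previously at $-j$). If one wanted to be slightly slicker, one could instead derive the identity from the synchronous case together with the formula $\Delta_A = $ (update only on $A$), but the pointwise argument above is shortest and self-contained given the definitions in Section~\ref{subsec:updating-schemes}.
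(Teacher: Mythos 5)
Your pointwise computation is correct: both auxiliary facts ($\tau_j(c)_{i+N} = c_{(i+j)+N}$ and $i \in A-j \iff i+j \in A$) check out against the paper's definitions, and together they give the identity. The paper states this lemma with no proof at all (it is marked as immediate), and your argument is precisely the routine verification the authors intended to leave to the reader.
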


\subsection{Purely Asynchronous Cellular Automata}
\label{subsec:deciding-purely}

If every transition of a purely asynchronous \CA, in
which only cell $0$ is active as well as arbitrary neighbors of cell
$0$, can be inverted by another automaton and vice versa, then these
automata are inverse to each other.

This follows from locality properties and translation invariance of
purely asynchronous \CAs\ and is rigorously stated and proved in

\begin{lemma}
  \label{lem:restrict-active-cells}
  Two purely asynchronous \CAs\ $C = (R, N, Q, \delta)$ and
  $G = (R, N, Q, \gamma)$ are inverse to each other if, and only if,
  for each pair of global configurations $c, c' \in Q^R$ with
  $0 \in D_{c,c'} \subseteq \{0\} \cup N$ holds
  \begin{equation*}
    c' = \Delta_{D_{c,c'}}(c)
    \iff
    c = \Gamma_{D_{c,c'}}(c') \text{.}
  \end{equation*}
\end{lemma}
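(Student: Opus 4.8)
The plan is to use Lemma~\ref{lem:purely-inversion-characterized} to reduce the claim to the single equivalence $c' = \Delta_{D_{c,c'}}(c) \iff c = \Gamma_{D_{c,c'}}(c')$ for \emph{all} pairs with $D_{c,c'} \neq \es$, and then show that the hypothesis — which only constrains pairs with $0 \in D_{c,c'} \subseteq \{0\} \cup N$ — already suffices. The ``only if'' direction is immediate: the hypothesis is just the special case of Lemma~\ref{lem:purely-inversion-characterized} restricted to those pairs. So the work is entirely in the ``if'' direction.

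For ``if'', fix an arbitrary pair $c, c'$ with $D = D_{c,c'} \neq \es$ and suppose $c' = \Delta_D(c)$; I want $c = \Gamma_D(c')$, \ie $c_i = \gamma(c'_{i+N})$ for every $i \in D$ (and $c_i = c'_i$ for $i \notin D$, which is automatic). The idea is to verify this cell by cell: pick $i \in D$, and use translation invariance (the lemma $\tau_j \circ \Delta_A = \Delta_{A-j} \circ \tau_j$) to move $i$ to the origin, so assume $i = 0$. Since the local transition function of a purely asynchronous \CA\ sees only cell $0$ and its neighbors, the value $\Delta_D(c)_0$ depends only on $c|_{\{0\}\cup N}$ and on whether $0 \in D$; likewise $\Gamma_D(c')_0$ depends only on $c'|_{\{0\}\cup N}$ and whether $0 \in D$. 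So I would construct an auxiliary pair $\tilde c, \tilde c'$ that agrees with $c, c'$ on $\{0\} \cup N$ but has difference set $\tilde D := D \cap (\{0\}\cup N)$: concretely, let $\tilde c$ agree with $c$ on $\{0\}\cup N$ and be constant (equal to some fixed state, or equal to $c$) off $\{0\}\cup N$, and define $\tilde c'$ to agree with $c'$ on $\{0\}\cup N$ and with $\tilde c$ elsewhere. Then $\tilde D = D_{\tilde c,\tilde c'}$ satisfies $0 \in \tilde D \subseteq \{0\}\cup N$ (using $0 \in D$), so the hypothesis applies: from $\tilde c' = \Delta_{\tilde D}(\tilde c)$ we get $\tilde c = \Gamma_{\tilde D}(\tilde c')$, hence $c_0 = \tilde c_0 = \gamma(\tilde c'_{0+N}) = \gamma(c'_{0+N})$, which is exactly what is needed at cell $0$. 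A small bookkeeping point: one must check $\tilde c' = \Delta_{\tilde D}(\tilde c)$ actually holds — this follows because on $\{0\}\cup N$ the configurations and the active set agree with $c, c', D$ (so the transition of cell $0$ reproduces $c'_0$), while off $\{0\}\cup N$ both $\tilde c$ and $\tilde c'$ equal the fixed filler, and cells in $\tilde D \setminus \{0\} \subseteq N$ don't change their state because they didn't change in the original transition (they have the same neighborhood-view only if their own neighborhoods lie in $\{0\}\cup N$ — see the obstacle below). The reverse implication ($c = \Gamma_D(c') \Rightarrow c' = \Delta_D(c)$) follows by the symmetric argument, swapping the roles of $C$ and $G$.

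The main obstacle is the filler argument for cells in $D \setminus (\{0\}\cup N)$ and, more subtly, for cells in $D \cap N$ whose own neighborhoods reach outside $\{0\}\cup N$. When I truncate $c$ to $\tilde c$ off $\{0\}\cup N$, a neighbor $n \in D \cap N$ may see a different local configuration in $\tilde c$ than in $c$, so I cannot directly conclude $\tilde c'_n = \tilde c_n$. The clean way around this is to \emph{not} try to reproduce the transition at those neighbors at all: I only need $\tilde c' = \Delta_{\tilde D}(\tilde c)$, and I get to \emph{choose} $\tilde c'$ on $N$ to be whatever $\Delta_{\tilde D}(\tilde c)$ produces there, as long as that choice still agrees with $c'$ at cell $0$'s view. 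But cell $0$'s transition only reads $\tilde c|_{\{0\}\cup N} = c|_{\{0\}\cup N}$, so $\Delta_{\tilde D}(\tilde c)_0 = \Delta_D(c)_0 = c'_0$ regardless of how the neighbors transition. So I should \emph{define} $\tilde c' := \Delta_{\tilde D}(\tilde c)$ outright, note $\tilde c'_0 = c'_0$ and $\tilde c'|_{R\setminus(\{0\}\cup N)} = \tilde c|_{R\setminus(\{0\}\cup N)}$, observe $0 \in D_{\tilde c, \tilde c'} \subseteq \{0\}\cup N$, invoke the hypothesis to get $\tilde c = \Gamma_{D_{\tilde c,\tilde c'}}(\tilde c')$, and then restrict attention to cell $0$: since $\gamma$ reads only $\{0\}\cup N$ and $D_{\tilde c,\tilde c'}$ contains $0$, we get $c_0 = \tilde c_0 = \gamma(\tilde c'|_{0+N}) = \gamma(c'|_{0+N}) = \Gamma_D(c')_0$, completing the cell-$0$ verification and hence, via translation, the whole proof.
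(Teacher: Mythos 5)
Your strategy is the paper's: reduce via Lemma~\ref{lem:purely-inversion-characterized}, verify the backward direction cell by cell, translate the chosen cell $i \in D_{c,c'}$ to the origin, cut the active set down to $\{0\} \cup N$ (and then to the actual difference set), invoke the hypothesis, and use that $\Gamma_{\{0\}}(\cdot)_0 = \gamma(\cdot_{0+N})$ reads only the neighborhood of $0$. Read with your option ``$\tilde c$ equal to $c$ off $\{0\}\cup N$'' --- i.e.\ $\tilde c = \tau_i(c)$, so that only the \emph{active set} is truncated and never the configuration --- your final paragraph is exactly the paper's argument (the paper's $d = \tau_i(c)$ and $d' = \Delta_{(D_{c,c'}-i)\cap(\{0\}\cup N)}(d)$).

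The one genuine problem is the other option you leave open, a constant filler off $\{0\}\cup N$. You correctly diagnose that the filler disturbs the forward transition at a neighbor $n \in N$ whose own neighborhood $n+N$ leaves $\{0\}\cup N$, and you repair this by \emph{defining} $\tilde c' := \Delta_{\tilde D}(\tilde c)$. But the same disturbance resurfaces in your last display: the equality $\gamma(\tilde c'|_{0+N}) = \gamma(c'|_{0+N})$ needs $\tilde c'|_{N} = c'|_{N}$, and for an active $n \in N$ one has $\tilde c'_n = \delta(\tilde c_{n+N})$, which reads cells in $N+N$ where the filler lives; in general $\tilde c'_n \neq c'_n$ there and the chain of equalities breaks. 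So the filler variant must be discarded; with $\tilde c = \tau_i(c)$ everything goes through. Two small points to make explicit in a final write-up: first, before invoking the hypothesis you need $\tilde c' = \Delta_{D_{\tilde c,\tilde c'}}(\tilde c)$ (not merely $\tilde c' = \Delta_{\tilde D}(\tilde c)$), obtained by deactivating the cells of $\tilde D$ that do not change state, since the hypothesis is phrased with the difference set itself as the active set; second, $0 \in D_{\tilde c,\tilde c'}$ because $\tilde c_0 = c_i \neq c'_i = \tilde c'_0$, which is what licenses reading off $\tilde c_0 = \gamma(\tilde c'_{0+N})$.
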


\begin{proof}
  The forward direction follows directly from
  lemma~\ref{lem:purely-inversion-characterized}. For the backward
  direction consider any global configurations $c, c' \in Q^R$.

  First, let $c' = \Delta_{D_{c,c'}}(c)$. Choose any cell
  $i \in D_{c,c'}$ that changes its state during the transition from $c$
  to $c'$. Because $C$ and $G$ are translation invariant it follows that
  \begin{align*}
    \Gamma_{D_{c,c'}}(c')_i
      &= \tau_i(\Gamma_{D_{c,c'}}(c'))_0\\
      &= \Gamma_{D_{c,c'}-i}(\tau_i(c'))_0\\
      &= \Gamma_{D_{c,c'}-i}(\tau_i(\Delta_{D_{c,c'}}(c)))_0\\
      &= \Gamma_{D_{c,c'}-i}(\Delta_{D_{c,c'}-i}(\tau_i(c)))_0 \text{.}
  \end{align*}
  With $0 \in D_{c,c'} - i$ we further get
  \[ \Gamma_{D_{c,c'}}(c')_i = \Gamma_{\set{0}}(\Delta_{D_{c,c'}-i}(\tau_i(c)))_0 \text{.} \]
  Because the transition state of cell $0$ only depends on its
  neighborhood we farther deduce
  \[ \Gamma_{D_{c,c'}}(c')_i = \Gamma_{\set{0}}(\underbrace{\Delta_{(D_{c,c'}-i) \cap (\set{0} \cup N)}(\overbrace{\tau_i(c)}^{d})}_{d'})_0 \text{.} \]
  We can make cells, which do not change state if active, inactive
  without disturbing the transition from $d$ to $d'$:
  \[ d' = \Delta_{D_{d,d'}}(d) \text{.} \]
  The difference $D_{d,d'}$ of $d$ and $d'$ is contained in
  $\set{0} \cup N$. From $0 \in (D_{c,c'}-i) \cap (\set{0} \cup N)$ we
  furthermore get
  \[ d_0 = \tau_i(c)_0 = c_i \neq c'_i = \delta(c_{i+N}) = \delta(\tau_i(c)_{0+N}) = d'_0 \text{,} \]
  meaning $0 \in D_{d,d'}$. Thus, by the premise of the backward
  direction,
  \[ d = \Gamma_{D_{d,d'}}(d') \]
  and therefore
  \[ \Gamma_{D_{c,c'}}(c')_i = \Gamma_{\set{0}}(d')_0 = \Gamma_{D_{d,d'}}(d')_0 = d_0 = \tau_i(c)_0 = c_i \text{.} \]
  Because cell $i$ was arbitrarily chosen, $c = \Gamma_{D_{c,c'}}(c')$
  and we have proved the implication
  \[ c' = \Delta_{D_{c,c'}}(c) \Longrightarrow c = \Gamma_{D_{c,c'}}(c') \text{.} \]
  By symmetry we conclude that the implication
  \[ c' = \Delta_{D_{c,c'}}(c) \Longleftarrow c = \Gamma_{D_{c,c'}}(c') \]
  holds as well. Because the global configurations $c$ and $c'$ were arbitrarily
  chosen $C$ and $G$ are inverse to each other according to
  lemma~\ref{lem:purely-inversion-characterized}.
\end{proof}
Thus only cell $0$, the neighbors of cell $0$, and the neighbors of each
neighbor of cell $0$ have to be considered when testing whether two
purely asynchronous \CAs\ are inverse to each other, because the
transition state of cell $0$ or an arbitrary neighbor of cell $0$ only
depends in its observed local configuration.

\begin{theorem}
  Phase space invertibility is decidable for purely asynchronous \CAs.
\end{theorem}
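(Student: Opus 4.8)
The plan is to reduce the decision problem to checking a finite set of conditions, using Lemma~\ref{lem:restrict-active-cells} together with the Corollary that an invertible \ACA\ always has an inverse with the same neighborhood. Given an input purely asynchronous \CA\ $C = (R, N, Q, \delta)$, the algorithm enumerates all candidate automata $G = (R, N, Q, \gamma)$: since $Q$ and $N$ are finite, there are only finitely many local transition functions $\gamma: Q^N \to Q$, so this is a finite search. For each candidate $G$ the algorithm tests whether $C$ and $G$ are inverse to each other. If some candidate passes, output ``invertible''; if none does, output ``not invertible''. Correctness of the ``not invertible'' answer rests on the Corollary: were $C$ invertible at all, it would have an inverse with neighborhood $N$, hence one of the enumerated candidates would pass.

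The key remaining point is that the test ``are $C$ and $G$ inverse to each other'' is itself decidable. By Lemma~\ref{lem:restrict-active-cells} it suffices to check, for every pair of global configurations $c, c'$ with $0 \in D_{c,c'} \subseteq \{0\} \cup N$, the equivalence $c' = \Delta_{D_{c,c'}}(c) \iff c = \Gamma_{D_{c,c'}}(c')$. Although $c$ and $c'$ range over the uncountable set $Q^R$, the value $\Delta_{D_{c,c'}}(c)$ restricted to the cells that matter depends only on the states of $c$ on $(\{0\}\cup N) + N$ — that is, cell $0$, its neighbors, and the neighbors of its neighbors — because a cell in $\{0\}\cup N$ computes its next state from its own $N$-neighborhood, and cells outside $\{0\}\cup N$ stay fixed. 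Likewise for $\Gamma$. So one only needs to quantify over the finitely many assignments $Q^{(\{0\}\cup N)+N}$, and over the finitely many subsets $D \subseteq \{0\}\cup N$ with $0 \in D$; for each such finite configuration one evaluates $\delta$ and $\gamma$ on the relevant neighborhoods, checks whether $c'$ (off this window $c' = c$) satisfies $c' = \Delta_D(c)$, symmetrically whether $c = \Gamma_D(c')$, and compares the two truth values. This is a finite computation.

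Putting the pieces together: outer loop over the finitely many $\gamma$; inner loop over the finitely many relevant finite configurations and the finitely many difference sets $D$; each iteration a bounded evaluation of $\delta$ and $\gamma$. The whole procedure terminates and is correct, so phase space invertibility of purely asynchronous \CAs\ is decidable.

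I expect the only subtlety — not a real obstacle, since Lemma~\ref{lem:restrict-active-cells} already does the heavy lifting — to be the bookkeeping that the finite window $(\{0\}\cup N)+N$ genuinely determines all the quantities appearing in the equivalence of Lemma~\ref{lem:restrict-active-cells}, including that $D_{c,c'}$ itself lies inside $\{0\}\cup N$ so that $\Gamma_{D_{c,c'}}$ reads only inside the window. One should also note that for a candidate $G$ the requirement $M = N$ is built into the enumeration, and that dummy neighbors in $N$ cause no problem: they merely enlarge the search space without affecting correctness.
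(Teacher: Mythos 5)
Your proposal is correct and follows essentially the same route as the paper: enumerate the finitely many candidate inverses $G=(R,N,Q,\gamma)$ with the same neighborhood (justified by the corollary of section~\ref{subsec:neighborhood}), and for each one reduce the test of whether $C$ and $G$ are inverse, via Lemma~\ref{lem:restrict-active-cells}, to a finite check over configurations on the window $\set{0}\cup N\cup(N+N)$ and difference sets $D$ with $0\in D\subseteq\set{0}\cup N$. The only nitpick is notational: your window $(\set{0}\cup N)+N$ equals $N\cup(N+N)$ and can omit cell $0$ itself when $0\notin N$, but your verbal description (cell $0$, its neighbors, and their neighbors) matches the paper's window exactly.
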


\begin{proof}
  Let $C=(R,N,Q,\delta)$ be an arbitrary purely asynchronous \CA.
  Consider each of the $\card{Q}^{\card{Q}^{\card{N}}}$ purely
  asynchronous \CAs\ $G=(R,N,Q,\gamma)$ in turn. Test whether
  for each pair $c, c' \in Q^{\set{0} \cup N \cup (N+N)}$ of finite
  configurations with $0 \in D_{c,c'} \subseteq \set{0} \cup N$ the
  equivalence
  \begin{equation*}
    c' = \Delta_{D_{c,c'}}(c)
    \iff
    c = \Gamma_{D_{c,c'}}(c')
  \end{equation*}
  holds. If this is the case for one automaton $G$, then $C$ and $G$
  are inverse to each other, and $C$ is invertible. Otherwise $C$ is
  not invertible.

  Note that $N+N = \set{m + n \mid m, n \in N}$.
\end{proof}

\begin{remark}
  It is in fact not necessary to consider all purely asynchronous
  \CAs\ with the same neighborhood. One can show that if $C$ is
  invertible, the local transition function of its inverse can be
  constructed easily from the one of $C$. Thus only one candidate needs
  to be considered.

  For further details see \cite[section~3.5]{Wacker_2012_RAZ_mt}.
\end{remark}
The above proofs are incorrect if we restrict purely asynchronous
\CAs\ to non-empty sets of active cells. But at least in the
one-dimensional case invertibility is still decidable (see
\cite[section~6.2.3]{Wacker_2012_RAZ_mt} for further details). This is
due to the fact that a purely asynchronous \CA\ that is restricted to
non-empty sets of active cells is invertible if, and only if, it is
invertible without the restriction and invertible as fully
asynchronous \CA\ (see
\cite[corollary~3.15]{Wacker_2012_RAZ_mt}). And, as we shall see, the
latter is decidable in the one-dimensional case.

\subsection{Fully Asynchronous Cellular Automata}
\label{subsec:deciding-fully}

In each transition of fully asynchronous \CAs\ exactly one cell is
active, whereas in each transition of purely asynchronous \CAs\
arbitrary cells are active. One may naively think that deciding
invertibility is thus easier for fully than for purely asynchronous
\CAs. But the added difficulty is that in every transition one cell
\emph{must} be active.

To decide whether two fully asynchronous \CAs\ $C$ and $G$ are inverse
to each other or not, we have to show that if $C$ transits a global
configuration $c$ into $c'$ when cell $a$ is active, that $G$ can
transit $c'$ into $c$ when a cell $a'$ is active or that no such cell
exists, and vice versa.

If $c_a \neq c'_a$ then $a$ and $a'$ must be the same cell or $C$ and
$G$ are not inverse. But if $c_a = c'_a$ then $a$ and $a'$ are in
general different cells. So how or where can we either find such a cell
$a'$ or decide that no such cell exists?

We answer this question in the one-dimensional case in

\begin{lemma}
  \label{lem:restrict-active-cells-fully}
  Let $C = (\Z, N, Q, \delta)$ and $G = (\Z, N, Q, \gamma)$ be two
  fully asynchronous one-dimensional \CAs. Define the maximal distance
  of a neighbor
  \begin{equation*}
    m = \begin{cases}
      \max_{n \in N} \abs{n} & \text{ if $N \neq \es$,}\\
      -\infty                & \text{ if $N = \es$,}
    \end{cases}
  \end{equation*}
  and a finite set of active candidate cells
  \begin{equation*}
    \mathcal{A} = \begin{cases}
      \set{-{\card{Q}}^{2m+1},\dotsc,-1,0,1,\dotsc,{\card{Q}}^{2m+1}}
              & \text{ if $m \neq -\infty$,}\\
      \set{0} & \text{ if $m = -\infty$.}
    \end{cases}
  \end{equation*}
  Then $C$ and $G$ are inverse to each other if, and only if,
  for each pair of global configurations $c, c' \in Q^{\Z}$ with
  $D_{c,c'} = \set{0}$
  \begin{equation*}
    \label{eq:lem_decide-fully-asynchron-1}
    c' = \Delta_{\set{0}}(c)
    \iff
    c = \Gamma_{\set{0}}(c')
    \tag{1} \\
  \end{equation*}
  and for each pair of global configurations $c, c' \in Q^{\Z}$ with
  $c = \Delta_{\set{0}}(c)$ and $c' = \Gamma_{\set{0}}(c')$
  \begin{equation*}
    \label{eq:lem_decide-fully-asynchron-2}
    \exists a \in \mathcal{A}: c = \Gamma_{\set{a}}(c)
    \text{ and }
    \exists a' \in \mathcal{A}: c' = \Delta_{\set{a'}}(c')
    \text{.}
    \tag{2} \\
  \end{equation*}
\end{lemma}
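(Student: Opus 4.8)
The plan is to prove the two directions separately. The forward direction is easy: if $C$ and $G$ are inverse, then by Lemma~\ref{lem:fully-inversion-characterized}, condition~\eqref{eq:lem_decide-fully-asynchron-1} is exactly the case $D_{c,c'} = \set{a}$ of that characterization (specialized to $a = 0$, which suffices by translation invariance), and for~\eqref{eq:lem_decide-fully-asynchron-2} we use the case $D_{c,c'} = \es$: if $c = \Delta_{\set{0}}(c)$ then $c \in \Delta_1(c)$, so $c \in \Gamma_1(c)$, \ie $c = \Gamma_{\set{a}}(c)$ for \emph{some} $a \in \Z$; the work is to show such an $a$ can always be found inside the bounded window $\mathcal{A}$. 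This is a pumping argument: the condition ``cell $a$ is active and does not change state'' depends only on the restriction of $c$ to $a + N \subseteq [a-m, a+m]$, a window of $2m+1$ cells, so there are at most $\card{Q}^{2m+1}$ distinct such windows; if some valid $a$ exists at distance $> \card{Q}^{2m+1}$ from the origin, then two of the windows at positions $0, 1, \dots, a$ (or $0, -1, \dots, a$) coincide, and we can cut out the block between them to produce a shorter witness, iterating until the witness lies in $\mathcal{A}$. (Here one must be slightly careful: cutting changes the configuration $c$, so strictly one argues that \emph{the same} ``shiftable'' structure forces a witness in $\mathcal{A}$ for the \emph{original} $c$ — the cleanest way is: the set of positions $a$ with $c = \Gamma_{\set{a}}(c)$ is determined by a regular (indeed, local window) condition on the bi-infinite word $c$, and a nonempty such set, being defined by a sliding window of width $2m+1$ over finitely many symbols, must already have an element in $\set{-\card{Q}^{2m+1}, \dots, \card{Q}^{2m+1}}$ once one notes that between any two positions realizing the same window one can relocate a witness — but since we only need \emph{existence} of a witness for $c$ itself, the honest statement is that if the nearest witness to $0$ were beyond $\mathcal{A}$, the windows at the $\card{Q}^{2m+1}+1$ positions $0, 1, \dots$ up to it would repeat, and the positions strictly between the first repetition are themselves witnesses only after excision; I would instead phrase~\eqref{eq:lem_decide-fully-asynchron-2} as a \emph{necessary and sufficient local condition} and prove directly that its satisfiability over all of $\Z$ is equivalent to its satisfiability over $\mathcal{A}$ by a standard automaton-emptiness/pumping argument.)

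For the backward direction, assume~\eqref{eq:lem_decide-fully-asynchron-1} and~\eqref{eq:lem_decide-fully-asynchron-2} hold; I must show $C$ and $G$ are inverse, \ie verify the two conditions of Lemma~\ref{lem:fully-inversion-characterized} for an arbitrary pair $c, c'$. The case $\card{D_{c,c'}} \geq 2$ is automatic. The case $D_{c,c'} = \set{a}$: by translation invariance (the lemma before subsection~\ref{subsec:deciding-purely}), $\tau_a(c)$ and $\tau_a(c')$ differ only at $0$, so~\eqref{eq:lem_decide-fully-asynchron-1} applied to them gives $c' = \Delta_{\set{a}}(c) \iff c = \Gamma_{\set{a}}(c')$, as needed. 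The remaining case $c = c'$ (\ie $D_{c,c'} = \es$): I must show $\exists a: c = \Delta_{\set{a}}(c) \iff \exists a': c = \Gamma_{\set{a'}}(c)$. Suppose $c = \Delta_{\set{a}}(c)$ for some $a \in \Z$; translating by $a$, we may assume $c = \Delta_{\set{0}}(c)$. I now need a cell where $G$ fixes $c$. If $c$ is also fixed by $G$ at $0$, done; otherwise I want to invoke~\eqref{eq:lem_decide-fully-asynchron-2}, but that premise requires \emph{both} $c = \Delta_{\set{0}}(c)$ \emph{and} $c = \Gamma_{\set{0}}(c)$ simultaneously. So the real argument is: take the single cell $0$ with $c = \Delta_{\set{0}}(c)$; let $c'' = \Gamma_{\set{0}}(c)$. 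If $c'' = c$ we are done. If $c'' \neq c$, then $D_{c,c''} = \set{0}$, and by the already-proven case~\eqref{eq:lem_decide-fully-asynchron-1} (in the direction $c = \Gamma_{\set{0}}(c'')$ where we read it backwards with roles swapped — here I must be careful about which automaton plays which role) we would get $c'' = \Delta_{\set{0}}(c)$, contradicting $c = \Delta_{\set{0}}(c)$ and $c'' \neq c$ unless... — this needs a cleaner handling, and the correct move is: the hypotheses~\eqref{eq:lem_decide-fully-asynchron-1} and~\eqref{eq:lem_decide-fully-asynchron-2} together with the case analysis show that ``$c$ has a $\Delta$-selfloop at some cell'' and ``$c$ has a $\Gamma$-selfloop at some cell'' are each equivalent to ``$c$ has a $\Delta$-selfloop at some cell \emph{and} a $\Gamma$-selfloop at some cell,'' because~\eqref{eq:lem_decide-fully-asynchron-1} forces that whenever $c$ is fixed by $\Delta$ at $0$ but moved by $\Gamma$ at $0$, applying $\Gamma$ at $0$ lands in a configuration $c''$ with $D_{c,c''} = \set 0$ which by~\eqref{eq:lem_decide-fully-asynchron-1} is $\Delta_{\set 0}(c)$-reachable, contradiction — hence $\Delta$-fixed-at-$0$ implies $\Gamma$-fixed-at-$0$, and symmetrically, which collapses the equivalence.

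The main obstacle, and the step I would spend the most care on, is the \textbf{pumping/locality argument bounding the search to $\mathcal{A}$} in condition~\eqref{eq:lem_decide-fully-asynchron-2}: making precise that whether there exists an active cell $a$ with $c = \Gamma_{\set{a}}(c)$ is a condition on the bi-infinite word $c$ that is testable by a finite window, so that if it holds at all it holds at a position within distance $\card{Q}^{2m+1}$ of the origin. The subtlety is that ``$a$ is a selfloop position for $c$'' is a purely local property — it holds iff $\gamma(c_{a+N}) = c_a$, depending only on the $(2m+1)$-cell window around $a$ — so the set $S = \set{a \mid \gamma(c_{a+N}) = c_a}$ is completely determined by the sequence of windows; if $S \neq \es$ but $S \cap \mathcal{A} = \es$, then among the positions $0, \pm 1, \dots$ out to the nearest element of $S$ two windows repeat, and \textbf{because the window at a selfloop position determines selfloop-ness}, shifting that repeated block moves the nearest selfloop closer, contradiction — this is the one place where one genuinely uses that $\mathcal{A}$ has radius $\card{Q}^{2m+1}$ rather than something smaller, and it must be written out carefully. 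Everything else is bookkeeping with translation invariance and the three-case split inherited from Lemma~\ref{lem:fully-inversion-characterized}.
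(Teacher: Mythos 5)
Your reduction of condition~(2) to a pumping argument on the fixed configuration $c$ does not work, and this is the heart of the lemma. The set $S=\set{a\in\Z \mid \gamma(c_{a+N})=c_a}$ is indeed determined by a sliding window of width $2m+1$, but a nonempty window-definable set of positions need \emph{not} meet $\mathcal{A}$: take $c$ to be a uniform background with one exceptional block far beyond $\card{Q}^{2m+1}$ whose window is the only self-loop window. Excising a repeated block, as you yourself note, produces a \emph{different} configuration; a witness for the excised configuration is not a witness for $c$, and iterating excisions only certifies witnesses for the modified configurations, at positions whose preimages in $c$ drift arbitrarily far out. So your proposed equivalence of ``satisfiable over $\Z$'' and ``satisfiable over $\mathcal{A}$'' is false for a fixed $c$. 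The paper's proof goes the other way around: from repeated windows $c[\alpha-m,\alpha+m]=c[\alpha'-m,\alpha'+m]$ and $c[\beta'-m,\beta'+m]=c[\beta-m,\beta+m]$ with $\alpha,\alpha',\beta,\beta'\in\mathcal{A}$ it builds an \emph{eventually periodic} auxiliary configuration $c'$ agreeing with $c$ on the central block and repeating the overlap blocks outward, so that \emph{every} window of $c'$ occurs as a window of $c$ at a position in $\set{\beta',\dotsc,\alpha'}\subseteq\mathcal{A}$. The inverse hypothesis is then applied to $c'$, not to $c$: since $c'=\Delta_{\set{0}}(c')$, some cell $a'$ satisfies $c'=\Gamma_{\set{a'}}(c')$, and the window at $a'$ transfers this to a $\Gamma$-self-loop of $c$ inside $\mathcal{A}$. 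This auxiliary construction is the missing idea.

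Your backward direction also breaks in the case $D_{c,c'}=\es$. The premise of~(2) concerns a pair of \emph{independent} configurations, one $\Delta$-fixed at $0$ and one $\Gamma$-fixed at $0$; to conclude $\exists a\in\mathcal{A}\colon c=\Gamma_{\set{a}}(c)$ you only need $c=\Delta_{\set{0}}(c)$, so the detour you take is unnecessary — and it is also incorrect. From $c=\Delta_{\set{0}}(c)$ and $c''=\Gamma_{\set{0}}(c)\neq c$, condition~(1) applied to the pair with difference $\set{0}$ yields $c=\Delta_{\set{0}}(c'')$ and $c\neq\Gamma_{\set{0}}(c'')$, which is no contradiction; your conclusion that $\Delta$-fixed-at-$0$ implies $\Gamma$-fixed-at-$0$ is false in general, and if it were true one could take $\mathcal{A}=\set{0}$ and the whole periodization apparatus would be superfluous. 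The correct move, and the paper's, is to translate so the active cell is $0$ and invoke~(2) directly to obtain a $\Gamma$-self-loop somewhere, which is all that Lemma~\ref{lem:fully-inversion-characterized} requires.
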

In the proof of this lemma we need segments of global configurations:
For each global configuration $c \in Q^{\Z}$ and cells $i, j \in \Z$
with $i \leq j$ the \define{$(i,j)$-segment (of $c$)} $c[i,j]$ is the
($j-i+1$)-tuple with $c[i,j] = (c_i,c_{i+1},\dotsc,c_{j})$.

\begin{proof}
  If there is only one state, meaning $\card{Q} = 1$, there is nothing
  to show. We thus assume $\card{Q} \geq 2$.
  \begin{itemize}
    \item[``$\Rightarrow$''] Let $C$ and $G$ be inverse to each other.
      According to lemma~\ref{lem:fully-inversion-characterized}:
      Equality~\eqref{eq:lem_decide-fully-asynchron-1} holds. To
      show equality~\eqref{eq:lem_decide-fully-asynchron-2} we choose
      an arbitrary global configuration $c \in Q^{\Z}$ with
      $c = \Delta_{\set{0}}(c)$ or $c = \Gamma_{\set{0}}(c)$.

      At first, let $N = \es$. Then $\mathcal{A} = \set{0}$. Let
      $c = \Delta_{\set{0}}(c)$. Assume $c \neq \Gamma_{\set{0}}(c)$.
      Consider the global configuration $c' \in Q^{\Z}$ with
      $c'_i = c_0$ for each cell $i \in \Z$. Then
      $c' = \Delta_{\set{0}}(c')$ and from
      \begin{equation*}
        c'_a = c_0 \neq \Gamma_{\set{0}}(c)_0
                      = \Gamma_{\set{a}}(c')_a
      \end{equation*}
      we obtain $c' \neq \Gamma_{\set{a}}(c')$ for each cell
      $a \in \Z$ -- in contradiction to the premise that $C$ and
      $G$ are inverse to each other. Thus our assumption is false,
      meaning $c = \Gamma_{\set{0}}(c)$. One can analogously show that
      $c = \Gamma_{\set{0}}(c)$ implies $c = \Delta_{\set{0}}(c)$.

      Now let $N \neq \es$. The idea is to construct a global
      configuration $c' \in Q^{\Z}$ for which
      \[ c' \in \Delta_1(c') \iff c' \in \Gamma_1(c') \]
      does \emph{not} hold if there does \emph{not} exist a cell
      $a \in \mathcal{A}$ such that $c = \Gamma_{\set{a}}(c)$, if
      $c = \Delta_{\set{0}}(c)$, or $c = \Delta_{\set{a}}(c)$, if
      $c = \Gamma_{\set{0}}(c)$, contradicting that $C$ and $G$ are
      inverse to each other.

      Consider the ${\card{Q}}^{2m+1} + 1$ segments
      \[ c[\alpha-m,\alpha+m] \]
      for cells $\alpha \in \set{0,1,\dotsc,{\card{Q}}^{2m+1}}$.
      Because at most ${\card{Q}}^{2m+1}$ of these segments are
      different, there exist two cells
      $\alpha, \alpha' \in \set{0,1,\dotsc,{\card{Q}}^{2m+1}}$ with
      $\alpha < \alpha'$ such that
      \[ c[\alpha-m,\alpha+m] = c[\alpha'-m,\alpha'+m] \text{.} \]
      Analogously there exist two cells
      $\beta', \beta \in \set{-{\card{Q}}^{2m+1},\dotsc,1,0}$ with
      $\beta' < \beta$ such that
      \[ c[\beta'-m,\beta'+m] = c[\beta-m,\beta+m] \text{.} \]

      \begin{center}
        \tikz[scale=0.2]{
          \path[fill=gray] (19,0) rectangle +(3,1) node [above,pos=0.5,yshift=3pt,scale=1] {\scriptsize$\beta'$}; 
          \path[pattern=north east lines] (22,0) rectangle +(1,1); 
          \path[fill=gray] (23,0) rectangle +(3,1) node [above,pos=0.5,yshift=3pt,scale=1] {\scriptsize$\beta$}; 
          \path[fill=lightgray] (26,0) rectangle +(3,1); 
          \path[fill] (29,0) rectangle +(1,1) node [above,pos=0.5,yshift=3pt,scale=1] {\scriptsize$a$}; 
          \draw[decoration={brace,mirror,raise=2pt}, decorate] (28,0) -- (31,0) node [below,pos=0.5,anchor=north,yshift=-3pt,scale=1] {\scriptsize$c_{a+N}$};
          \path[fill=lightgray] (30,0) rectangle +(2,1); 
          \path[fill=gray] (32,0) rectangle +(3,1) node [above,pos=0.5,yshift=3pt,scale=1] {\scriptsize$\alpha$}; 
          \path[pattern=north east lines] (35,0) rectangle +(4,1); 
          \path[fill=gray] (39,0) rectangle +(3,1) node [above,pos=0.5,yshift=3pt,scale=1] {\scriptsize$\alpha'$}; 
          \draw[step=1] (3,0) grid +(53,1);
        }
      \end{center}

      Define for each index $k \in \N$
      \begin{align*}
         \alpha_k &= \alpha'+(k-1)(\alpha'-\alpha)-m \text{,}\\
        \alpha'_k &= \alpha'+k(\alpha'-\alpha)+m \text{,}
      \end{align*}
      and
      \begin{align*}
         \beta_k &= \beta'-(k-1)(\beta-\beta')+m \text{,}\\
        \beta'_k &= \beta'-k(\beta-\beta')-m \text{.}
      \end{align*}

      \begin{center}
        \tikz[scale=0.2]{
          \draw[decoration={brace,raise=2pt}, decorate] (3,1) -- (10,1) node [above,pos=0.5,anchor=south,yshift=3pt,scale=1] {\scriptsize$\beta_4',\dotsc,\beta_4$};
          \draw[decoration={brace,mirror,raise=2pt}, decorate] (7,0) -- (14,0) node [below,pos=0.5,anchor=north,yshift=-3pt,scale=1] {\scriptsize$\beta_3',\dotsc,\beta_3$};
          \draw[decoration={brace,raise=2pt}, decorate] (11,1) -- (18,1) node [above,pos=0.5,anchor=south,yshift=3pt,scale=1] {\scriptsize$\beta_2',\dotsc,\beta_2$};
          \draw[decoration={brace,mirror,raise=2pt}, decorate] (15,0) -- (22,0) node [below,pos=0.5,anchor=north,yshift=-3pt,scale=1] {\scriptsize$\beta_1',\dotsc,\beta_1$};
          \path[fill=gray] (19,0) rectangle +(3,1) node [above,pos=0.5,yshift=3pt,scale=1] {\scriptsize$\beta'$}; 
          \path[pattern=north east lines] (22,0) rectangle +(1,1); 
          \path[fill=gray] (23,0) rectangle +(3,1) node [above,pos=0.5,yshift=3pt,scale=1] {\scriptsize$\beta$}; 
          \path[fill=lightgray] (26,0) rectangle +(3,1); 
          \path[fill] (29,0) rectangle +(1,1) node [above,pos=0.5,yshift=3pt,scale=1] {\scriptsize$a$}; 
          \draw[decoration={brace,mirror,raise=2pt}, decorate] (28,0) -- (31,0) node [below,pos=0.5,anchor=north,yshift=-3pt,scale=1] {\scriptsize$c_{a+N}$};
          \path[fill=lightgray] (30,0) rectangle +(2,1); 
          \path[fill=gray] (32,0) rectangle +(3,1) node [above,pos=0.5,yshift=3pt,scale=1] {\scriptsize$\alpha$}; 
          \path[pattern=north east lines] (35,0) rectangle +(4,1); 
          \path[fill=gray] (39,0) rectangle +(3,1) node [above,pos=0.5,yshift=3pt,scale=1] {\scriptsize$\alpha'$}; 
          \draw[decoration={brace,mirror,raise=2pt}, decorate] (39,0) -- (49,0) node [below,pos=0.5,anchor=north,yshift=-3pt,scale=1] {\scriptsize$\alpha_1,\dotsc,\alpha_1'$};
          \draw[decoration={brace,raise=2pt}, decorate] (46,1) -- (56,1) node [above,pos=0.5,anchor=south,yshift=3pt,scale=1] {\scriptsize$\alpha_2,\dotsc,\alpha_2'$};
          \draw[step=1] (3,0) grid +(53,1);
          \clip (3,-5) rectangle +(53,10);
          \draw[decoration={brace,mirror,raise=2pt}, decorate] (53,0) -- (63,0) node [below,pos=0.5,anchor=north,yshift=-3pt,scale=1] {}; 
          \draw[decoration={brace,mirror,raise=2pt}, decorate] (-1,0) -- (6,0) node [below,pos=0.5,anchor=north,yshift=-3pt,scale=1] {}; 
        }
      \end{center}

      Because $0 \leq \alpha < \alpha'$ and
      $\beta' < \beta \leq 0$ we have
      \begin{align*}
                                           -m < \alpha' - m = \alpha_1 &< \alpha_2 < \alpha_3 < \dotsb \text{,}\\
        m < \alpha' + m < \alpha' + (\alpha' - \alpha) + m = \alpha_1' &< \alpha_2' < \alpha_3' < \dotsb \text{,}
      \end{align*}
      and
      \begin{align*}
          \dotsb < \beta_3 < \beta_2 &< \beta_1 = \beta' + m < m \text{,}\\
        \dotsb < \beta_3' < \beta_2' &< \beta_1' = \beta' - (\beta - \beta') - m < \beta' - m < -m \text{.}
      \end{align*}

      Consider the global configuration $c' \in Q^{\Z}$ with
      \begin{align*}
        c'[\beta'-m,\alpha'+m] &= c[\beta'-m,\alpha'+m] \text{,}\\
        c'[\alpha_k,\alpha'_k] &= c[\alpha-m,\alpha'+m] \text{ ($\forall k \in \N$), and}\\
          c'[\beta'_k,\beta_k] &= c[\beta'-m,\beta+m] \text{ ($\forall k \in \N$).}
      \end{align*}

      \begin{center}
        \tikz[scale=0.2]{
          \draw[decoration={brace,raise=2pt}, decorate] (3,1) -- (10,1) node [above,pos=0.5,anchor=south,yshift=3pt,scale=1] {\scriptsize$\beta_4',\dotsc,\beta_4$};
          \draw[decoration={brace,mirror,raise=2pt}, decorate] (7,0) -- (14,0) node [below,pos=0.5,anchor=north,yshift=-3pt,scale=1] {\scriptsize$\beta_3',\dotsc,\beta_3$};
          \draw[decoration={brace,raise=2pt}, decorate] (11,1) -- (18,1) node [above,pos=0.5,anchor=south,yshift=3pt,scale=1] {\scriptsize$\beta_2',\dotsc,\beta_2$};
          \draw[decoration={brace,mirror,raise=2pt}, decorate] (15,0) -- (22,0) node [below,pos=0.5,anchor=north,yshift=-3pt,scale=1] {\scriptsize$\beta_1',\dotsc,\beta_1$};
          \begin{scope}[shift={(-16,0)}]
            \path[fill=gray] (19,0) rectangle +(3,1); 
            \path[pattern=north east lines] (22,0) rectangle +(1,1); 
          \end{scope}
          \begin{scope}[shift={(-12,0)}]
            \path[fill=gray] (19,0) rectangle +(3,1); 
            \path[pattern=north east lines] (22,0) rectangle +(1,1); 
          \end{scope}
          \begin{scope}[shift={(-8,0)}]
            \path[fill=gray] (19,0) rectangle +(3,1); 
            \path[pattern=north east lines] (22,0) rectangle +(1,1); 
          \end{scope}
          \begin{scope}[shift={(-4,0)}]
            \path[fill=gray] (19,0) rectangle +(3,1); 
            \path[pattern=north east lines] (22,0) rectangle +(1,1); 
          \end{scope}
          \path[fill=gray] (19,0) rectangle +(3,1) node [above,pos=0.5,yshift=3pt,scale=1] {\scriptsize$\beta'$}; 
          \path[pattern=north east lines] (22,0) rectangle +(1,1); 
          \path[fill=gray] (23,0) rectangle +(3,1) node [above,pos=0.5,yshift=3pt,scale=1] {\scriptsize$\beta$}; 
          \path[fill=lightgray] (26,0) rectangle +(3,1); 
          \path[fill] (29,0) rectangle +(1,1) node [above,pos=0.5,yshift=3pt,scale=1] {\scriptsize$a$}; 
          \draw[decoration={brace,mirror,raise=2pt}, decorate] (28,0) -- (31,0) node [below,pos=0.5,anchor=north,yshift=-3pt,scale=1] {\scriptsize$c_{a+N}$};
          \path[fill=lightgray] (30,0) rectangle +(2,1); 
          \path[fill=gray] (32,0) rectangle +(3,1) node [above,pos=0.5,yshift=3pt,scale=1] {\scriptsize$\alpha$}; 
          \path[pattern=north east lines] (35,0) rectangle +(4,1); 
          \path[fill=gray] (39,0) rectangle +(3,1) node [above,pos=0.5,yshift=3pt,scale=1] {\scriptsize$\alpha'$}; 
          \begin{scope}[shift={(7,0)}]
            \path[pattern=north east lines] (35,0) rectangle +(4,1); 
            \path[fill=gray] (39,0) rectangle +(3,1); 
          \end{scope}
          \begin{scope}[shift={(14,0)}]
            \path[pattern=north east lines] (35,0) rectangle +(4,1); 
            \path[fill=gray] (39,0) rectangle +(3,1); 
          \end{scope}
          \draw[decoration={brace,mirror,raise=2pt}, decorate] (39,0) -- (49,0) node [below,pos=0.5,anchor=north,yshift=-3pt,scale=1] {\scriptsize$\alpha_1,\dotsc,\alpha_1'$};
          \draw[decoration={brace,raise=2pt}, decorate] (46,1) -- (56,1) node [above,pos=0.5,anchor=south,yshift=3pt,scale=1] {\scriptsize$\alpha_2,\dotsc,\alpha_2'$};
          \draw[step=1] (3,0) grid +(53,1);
          \clip (3,-5) rectangle +(53,10);
          \draw[decoration={brace,mirror,raise=2pt}, decorate] (53,0) -- (63,0) node [below,pos=0.5,anchor=north,yshift=-3pt,scale=1] {}; 
          \draw[decoration={brace,mirror,raise=2pt}, decorate] (-1,0) -- (6,0) node [below,pos=0.5,anchor=north,yshift=-3pt,scale=1] {}; 
        }
      \end{center}

      This global configuration exists, because the rear and front parts in
      which the definitions of two consecutive indices overlap are
      precisely the identical segments
      $c[\alpha-m,\alpha+m] = c[\alpha'-m,\alpha'+m]$ and
      $c[\beta'-m,\beta'+m] = c[\beta-m,\beta+m]$. In fact, for each
      index $k \in \N$ it holds that
      \begin{align*}
        \set{\beta'-m,\dotsc,\alpha'+m}
                \cap \set{\alpha_1,\dotsc,\alpha'_1}
            &= \set{\alpha'-m,\dotsc,\alpha'+m} \text{,}\\
        \set{\beta'-m,\dotsc,\alpha'+m}
                \cap \set{\beta'_1,\dotsc,\beta_1}
            &= \set{\beta'-m,\dotsc,\beta'+m} \text{,}\\
        \set{\alpha_k,\dotsc,\alpha'_k}
                \cap \set{\alpha_{k+1},\dotsc,\alpha'_{k+1}}
            &= \set{\alpha_{k+1},\dotsc,\alpha'_k}\\
            &= \set{\alpha'_k - 2m,\dotsc,\alpha'_k}\\
            &= \set{\alpha_{k+1},\dotsc,\alpha_{k+1} + 2m} \text{,}\\
        \set{\beta'_k,\dotsc,\beta_k}
                \cap \set{\beta'_{k+1},\dotsc,\beta_{k+1}}
            &= \set{\beta'_k,\dotsc,\beta_{k+1}}\\
            &= \set{\beta'_k,\dotsc,\beta'_k + 2m}\\
            &= \set{\beta_{k+1} - 2m,\dotsc,\beta_{k+1}} \text{,}
      \end{align*}
      and
      \begin{align*}
        c'[\alpha'-m,\alpha'+m]
          &= c[\alpha'-m,\alpha'+m]\\
          &= c[\alpha-m,\alpha+m] = c'[\alpha_1,\alpha_1+2m] \text{,}\\
        c'[\beta'-m,\beta'+m]
          &= c[\beta'-m,\beta'+m]\\
          &= c[\beta-m,\beta+m] = c'[\beta_1-2m,\beta_1] \text{,}\\
        c'[\alpha_k'-2m,\alpha_k']
          &= c[\alpha'-m,\alpha'+m]\\
          &= c[\alpha-m,\alpha+m] = c'[\alpha_{k+1},\alpha_{k+1}+2m] \text{,}\\
        c'[\beta_k',\beta_k'+2m]
          &= c[\beta'-m,\beta'+m]\\
          &= c[\beta-m,\beta+m] = c'[\beta_{k+1}-2m,\beta_{k+1}] \text{.}
      \end{align*}

      We will now show that for the global configuration $c'$ any
      cell whose neighborhood is not contained in $\mathcal{A}$
      behaves the same as at least one cell whose neighborhood is
      contained in $\mathcal{A}$ if active. Note that
      $N \subseteq \set{-m,\dotsc,m}$.

      \begin{assertion}
        For each cell $a' \in \Z$ exists a cell
        $a \in \set{\beta',\dotsc,\alpha'}$ such that
        \[ c'[a'-m,a'+m] = c[a-m,a+m] \text{.} \]
      \end{assertion}

      \begin{proof}
        Let $a' \in \Z$ be an arbitrary cell. If
        $a' \in \set{\beta',\dotsc,\alpha'}$, choose
        $a = a'$. From now on let
        $a' \notin \set{\beta',\dotsc,\alpha'}$. We consider only the
        case that $a' \geq \alpha' + 1$. The case $a' \leq \beta' - 1$
        can be shown analogously. Thus let $a' \geq \alpha' + 1$.

        Because $a' \geq \alpha_1 + m$ and each of the intersections
        $\set{\alpha_k,\dotsc,\alpha'_k} \cap \set{\alpha_{k+1},\dotsc,\alpha'_{k+1}}$ ($\forall k \in \N$) and $\set{a'-m,\dotsc,a+m}$ contains $2m + 1$ elements, there exists an
        index $k \in \N$ such that
        \[ \set{a'-m,\dotsc,a'+m} \subseteq \set{\alpha_k,\dotsc,\alpha'_k} \text{.} \]
        Choose $a = a' - k(\alpha' - \alpha) \in \set{\alpha,\dotsc,\alpha'}$. Then
        \[ c'[a'-m,a'+m] = c[a-m,a+m] \text{.} \]
      \end{proof}

      With this setup we can conclude the proof:

      \begin{enumerate}
        \item Let $c = \Delta_{\set{0}}(c)$. From
          \[ c'[-m,m] = c[-m,m] \]
          we have $c' = \Delta_{\set{0}}(c')$. Assume that for each
          cell $a \in \mathcal{A}$ it holds that
          $c \neq \Gamma_{\set{a}}(c)$. Choose an arbitrary cell
          $a' \in \Z$. Then there is a cell
          $a \in \set{\beta',\dotsc,\alpha'}$ such that
          \[ c'[a'-m,a'+m] = c[a-m,a+m] \text{.} \]
          Because cell $a$ is contained in $\mathcal{A}$ it holds that
          \[ c'_{a'} = c_a \neq \Gamma_{\set{a}}(c)_a = \Gamma_{\set{a'}}(c')_{a'} \]
          and thus $c' \neq \Gamma_{\set{a'}}(c')$. Because $a' \in \Z$
          was arbitrarily chosen, this contradicts the premise that
          $C$ and $G$ are inverse to each other. Thus our assumption
          is false, meaning there exists a cell $a \in \mathcal{A}$ with
          $c = \Gamma_{\set{a}}(c)$.
        \item Let $c = \Gamma_{\set{0}}(c)$. Like above one shows that
          there exists a cell $a' \in \mathcal{A}$ with
          $c = \Delta_{\set{a'}}(c)$.
      \end{enumerate}
      Therefore equality~\eqref{eq:lem_decide-fully-asynchron-2} holds
      as well.
    \item[``$\Leftarrow$''] For each pair of global configurations $c, c'\in Q^R$ with
        $D_{c,c'} = \set{a}$ holds
        \begin{align*}
          c' = \Delta_{\set{a}}(c)
            &\iff \tau_{a}(c') = \tau_{a}(\Delta_{\set{a}}(c)) = \Delta_{\set{0}}(\tau_{a}(c)) \\
            &\overset{eq.~\eqref{eq:lem_decide-fully-asynchron-1}}{\iff} \tau_{a}(c) = \Gamma_{\set{0}}(\tau_{a}(c')) = \tau_{a}(\Gamma_{\set{a}}(c'))\\
            &\iff c = \Gamma_{\set{a}}(c') \text{,}
        \end{align*}
        and for each global configuration $c \in Q^R$ holds
        \begin{align*}
          \exists& a \in R: c = \Delta_{\set{a}}(c)\\
            &\Longrightarrow \tau_{a}(c) = \tau_{a}(\Delta_{\set{a}}(c)) = \Delta_{\set{0}}(\tau_{a}(c))\\
            &\overset{eq.~\eqref{eq:lem_decide-fully-asynchron-2}}{\Longrightarrow} \exists a' \in R: \tau_{a}(c) = \Gamma_{\set{a'}}(\tau_{a}(c)) = \tau_{a}(\Gamma_{\set{a'+a}}(c))\\
            &\Longrightarrow c = \Gamma_{\set{a'+a}}(c) \text{,}
        \end{align*}
        and analogously
        \[ \exists a \in R: c = \Gamma_{\set{a}}(c) \Longrightarrow \exists a' \in R: c = \Delta_{\set{a'+a}}(c) \text{.} \]

        Hence, by lemma~\ref{lem:fully-inversion-characterized}, the
        \CAs\ $C$ and $G$ are inverse to each other.
  \end{itemize}
\end{proof}
Thus only cell $0$, the neighbors of cell $0$, the cells in
$\mathcal{A}$ and the neighbors of each cell in $\mathcal{A}$ have to
be considered when testing whether two fully asynchronous
one-dimensional \CAs\ are inverse to each other.

\begin{theorem}
  Phase space invertibility is decidable for fully asynchronous
  one-dimensional \CAs.
\end{theorem}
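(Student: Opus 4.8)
The plan is to proceed exactly as in the purely asynchronous case: reduce the search for an inverse to finitely many candidate automata, and then reduce the test of whether a given candidate is an inverse to a finite computation.

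Let $C = (\Z, N, Q, \delta)$ be the given fully asynchronous one-dimensional \CA. By the corollary of section~\ref{subsec:neighborhood}, if $C$ has an inverse at all, then it has one with the very same neighborhood $N$; hence it suffices to run through the $\card{Q}^{\card{Q}^{\card{N}}}$ fully asynchronous \CAs\ $G = (\Z, N, Q, \gamma)$ with neighborhood $N$ and to test each of them for being an inverse of $C$. For the test I would invoke lemma~\ref{lem:restrict-active-cells-fully}: the \CAs\ $C$ and $G$ are inverse to each other if, and only if, the equivalence~\eqref{eq:lem_decide-fully-asynchron-1} holds for all pairs $c, c' \in Q^{\Z}$ with $D_{c,c'} = \set{0}$, and condition~\eqref{eq:lem_decide-fully-asynchron-2} holds for all $c, c' \in Q^{\Z}$ with $c = \Delta_{\set{0}}(c)$ and $c' = \Gamma_{\set{0}}(c')$.

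It remains to observe that both conditions involve only finitely many cells and are therefore effectively checkable. Put $W = \mathcal{A} \cup (\mathcal{A} + N)$, a fixed finite set of cells, with $W = \set{0}$ when $N = \es$. Under the hypothesis $D_{c,c'} = \set{0}$ the statements $c' = \Delta_{\set{0}}(c)$ and $c = \Gamma_{\set{0}}(c')$ depend only on $c|_W$ and the single symbol $c'_0$; the statements $c = \Delta_{\set{0}}(c)$ and $c' = \Gamma_{\set{0}}(c')$ depend only on $c|_W$ and $c'|_W$; and whether some $a \in \mathcal{A}$ satisfies $c = \Gamma_{\set{a}}(c)$, respectively some $a' \in \mathcal{A}$ satisfies $c' = \Delta_{\set{a'}}(c')$, again depends only on $c|_W$, respectively $c'|_W$. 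Hence the algorithm enumerates the finitely many elements of $Q^W$ (together with the finitely many choices of $c'_0$) and, performing the finitely many evaluations of $\delta$ and $\gamma$ this requires, checks conditions~\eqref{eq:lem_decide-fully-asynchron-1} and~\eqref{eq:lem_decide-fully-asynchron-2} for the current candidate $G$. If some candidate passes every check it is an inverse of $C$ by lemma~\ref{lem:restrict-active-cells-fully}, so $C$ is invertible; if no candidate does, then $C$ is not invertible, by lemma~\ref{lem:restrict-active-cells-fully} together with the corollary of section~\ref{subsec:neighborhood}, which rules out inverses with a different neighborhood.

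The substantial work --- especially the choice of the finite set $\mathcal{A}$ of candidate active cells that makes the ``fixed-point'' case of inversion decidable through condition~\eqref{eq:lem_decide-fully-asynchron-2} --- has already been carried out in lemma~\ref{lem:restrict-active-cells-fully}, so I expect no genuine obstacle here. The single point that needs a little care is the quantifier alternation in condition~\eqref{eq:lem_decide-fully-asynchron-2} (``for every fixed-point configuration there exists a suitable active cell''), which is harmless precisely because both the relevant partial configurations and the relevant active cells range over the fixed finite sets $W$ and $\mathcal{A}$, respectively.
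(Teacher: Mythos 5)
Your proposal is correct and follows essentially the same route as the paper: enumerate the finitely many candidates $G$ with the same neighborhood (justified by the corollary of section~\ref{subsec:neighborhood}), and for each one check the two conditions of lemma~\ref{lem:restrict-active-cells-fully} on a fixed finite window of cells (your $W=\mathcal{A}\cup(\mathcal{A}+N)$ coincides with the paper's $T=\set{0}\cup N\cup\mathcal{A}\cup(\mathcal{A}+N)$ since $0\in\mathcal{A}$). Your explicit remark that each condition depends only on the restriction of the configurations to $W$ is exactly the locality observation the paper leaves implicit.
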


\begin{proof}
  Let $C=(\Z,N,Q,\delta)$ be an arbitrary fully asynchronous
  one-dimensional \CA\ and
  $T = \set{0} \cup N \cup \mathcal{A} \cup (\mathcal{A}+N)$.
  Consider each of the
  $\card{Q}^{\card{Q}^{\card{N}}}$ fully asynchronous one-dimensional
  cellular automata $G=(\Z,N,Q,\gamma)$ in turn. Test whether
  for each pair
  $c, c' \in Q^T$ of
  finite configurations with $D_{c,c'} = \set{0}$ the equivalence
  \begin{equation*}
    c' = \Delta_{\set{0}}(c)
    \iff
    c = \Gamma_{\set{0}}(c')
  \end{equation*}
  holds and for each finite configuration
  $c \in Q^T$ the implications
  \begin{align*}
    c = \Delta_{\set{0}}(c)
      &\Longrightarrow \exists a \in \mathcal{A}: c = \Gamma_{\set{a}}(c) \text{ and}\\
    c = \Gamma_{\set{0}}(c)
      &\Longrightarrow \exists a \in \mathcal{A}: c = \Delta_{\set{a}}(c)
  \end{align*}
  hold. If this is the case for one automaton $G$, then $C$ and $G$
  are inverse to each other, and $C$ is invertible. Otherwise $C$ is
  not invertible.

  Note that $\mathcal{A}+N = \set{a + n \mid a \in \mathcal{A}, n \in N}$.
\end{proof}

\begin{remark}
  It is in fact not necessary to consider all fully asynchronous
  one-dimensional cellular automata with the same neighborhood and
  states. One can show that if $C$ is invertible, the local transition
  function of its inverse can be constructed easily from the one of $C$.
  Thus only one candidate needs to be considered.

  For further details see \cite[section~3.5]{Wacker_2012_RAZ_mt}.
\end{remark}

\section{Elementary Cellular Automata}
\label{sec:elementary}

Using the decision algorithms it can easily be shown that exactly the
purely asynchronous elementary cellular automata with Wolfram numbers
$0$, $35$, $43$, $49$, $51$, $59$, $113$, $115$, $204$, and $255$ are
invertible and that exactly the fully asynchronous elementary cellular
automata with Wolfram numbers $33$, $35$, $38$, $41$, $43$, $46$,
$49$, $51$, $52$, $54$, $57$, $59$, $60$, $62$, $97$, $99$, $102$,
$105$, $107$, $108$, $113$, $115$, $116$, $118$, $121$, $123$, $131$,
$139$, $145$, $147$, $150$, $153$, $155$, $156$, $195$, $198$, $201$,
$204$, $209$, and $211$ are invertible.

For further details see \cite[chapter~7]{Wacker_2012_RAZ_mt}.

\section{Summary and Outlook}
\label{sec:summary-and-outlook}

We have introduced a definition of invertibility for \ACAs, namely
phase space invertibility, and shown that invertible purely
asynchronous \CAs\ are computationally universal, that invertibility
can be decided for arbitrary-dimensional purely and
one-dimensional fully asynchronous \CAs.

It remains open whether invertible fully asynchronous \CAs\ are
computationally universal and whether invertibility for
higher-dimensional fully asynchronous \CAs\ is decidable. If the
latter can be shown to be true then invertibility of purely
asynchronous \CAs\ restricted to non-empty sets of active cells would
also be decidable in higher dimensions.


\end{document}